\newtheorem{theorem}{Theorem}
\newcounter{geqncount}
    {\refstepcounter{equation}%
     \setcounter{geqncount}{\value{equation}}%
     \setcounter{equation}{0}%
  }%
    {\setcounter{equation}{\value{geqncount}}}
\newcommand{\epsr}{\gamma}
\newcommand{\ek}{\tau}
\newcommand{\ew}{\zeta}
\newcommand{\apbar}{a_\text{\tiny{$\Pbar$}}}
\newcommand{\ap}{a_\text{\tiny{$P$}}}
\newcommand{\omegap}{\omega_\text{\tiny p}}
\newcommand{\xx}{{\mathbf x}}
\newcommand{\yy}{{\mathbf y}}
\newcommand{\kk}{{\hat \kappa}}
\newcommand{\ZZ}{\mathbb{Z}}
\newcommand{\RR}{\mathbb{R}}
\newcommand{\CC}{\mathbb{C}}
\newcommand{\QQ}{{Q}}
\newcommand{\at}[1]{|_{\scriptscriptstyle{#1}}}
\newcommand{\nn}{{\mathbf n}}
\newcommand{\grady}{\nabla\hspace{-2pt}_\yy}
\newcommand{\gradx}{\nabla\hspace{-2pt}_\xx}
\newcommand{\kdotgrad}{\kk\hspace{-1pt}\cdot\hspace{-3pt}\nabla}
\newcommand{\dP}{{\partial P}}
\newcommand{\Pbar}{{P^c}}
\newcommand{\Honeper}{H^1_\text{\scriptsize per}}
\newcommand{\av}[2]{\langle {#1} \rangle_\text{\!\tiny$#2$\,}}
\newcommand{\half}{\frac{1}{2}}
\newcommand{\stsc}{\rightarrow\hspace{-1.7ex}\rightarrow}
\begin{document} 

\bibliographystyle{plain} 

\begin{center}
{\bf \Large  Convergent Power Series for Fields \\ \vspace{1ex} in Positive or Negative High-Contrast Periodic Media}
\end{center}

\vspace{0.2ex}

\begin{center}
{\scshape \large Santiago P. Fortes, Robert P. Lipton, and Stephen P. Shipman} \\
\vspace{1ex}
{\itshape Department of Mathematics, Louisiana State University}
\end{center}

\vspace{3ex}
\centerline{\parbox{0.9\textwidth}{
{\bf Abstract.}\ We obtain convergent power series representations for Bloch waves in periodic high-contrast media.  The material coefficient in the inclusions can be positive or negative.  The small expansion parameter is the ratio of period cell width to wavelength, and the coefficient functions are solutions of the cell problems arising from formal asymptotic expansion.  In the case of positive coefficient, the dispersion relation has an infinite sequence of branches, each represented by a convergent even power series whose leading term is a branch of the dispersion relation for the homogenized medium. 
In the negative case, there is a single branch.}}

\vspace{3ex}
\noindent
\begin{mbox}
{\bf Key words:}  high contrast, double porosity, photonic crystal, negative index, dispersion relation, power series solution, generating function, Bloch wave, homogenization, meta-material.
\end{mbox}
\vspace{3ex}

\hrule
\vspace{1.1ex}

\section{Introduction} 

The objective of this work is the obtention of convergent power-series representations of Bloch waves and dispersion relations for the Helmholtz equation
\begin{equation}\label{Helmholtz0}
  \nabla\cdot a_d(x)\nabla U(\xx) + \frac{\omega^2}{c^2} U(\xx) \,=\, 0\,,
\end{equation}
in which the scalar material coefficient $a_d(\xx)$ has small period $d$ in each spatial direction and consists of two highly contrasting phases, with either positive or negative contrast
\begin{equation*}
  a_d(\xx) =
    \renewcommand{\arraystretch}{1.5}
\left\{
  \begin{array}{ll}
    \pm d^2/\epsr & \text{for } \xx\in d(P+m),\, m\in\ZZ^n, \\
    1 & \text{for } \xx\in d(\Pbar + m),\, m\in\ZZ^n.
  \end{array}
\right.
\end{equation*}
The ``inclusion" $P$ is a subset of the unit cube $\QQ$ in $\RR^n$ ($n\geq2$) with $C^1$ boundary $\dP$, such that the complement $\Pbar=\QQ\setminus P$ contains the boundary sides of $\QQ$ (Fig.~\ref{fig:inclusion}).
Both $\xx$ and $d$ carry units of length so that $\xx/d\in\RR^n$.  The fixed constant $\epsr$ carries units of area, $\omega$ is the operating frequency, and $c$ is a fixed reference celerity.
We will develop power series in the expansion parameter $d$ or the parameter of quasi-staticity, which measures the ratio of cell size to wavelength.

This equation has applications to acoustic waves in porous media, high-contrast photonic crystals, and plasmonic crystals, in the regime of large wavelength-to-cell ratio.  In the former, the inclusion contains the soft phase and the host material is of a stiff phase.  In the latter two applications, the equation is a two-dimensional reduction of the Maxwell system in a periodic array of high-contrast rods, the field $u$ is the out-of-plane component of the magnetic field, and $a_d(\xx) = \epsilon^{-1}(\xx)$ is the reciprocal of the dielectric coefficient.  The rods consist of a lossless dielectric material in the case of a positive coefficient or a lossless plasma in the case of a negative coefficient.  

As $d$ tends toward zero in the positive case, a multi-branched homogenized, or quasi-static, dispersion relation relating frequency to Bloch wavevector emerges.  This relation, which was obtained by Zhikov \cite[\S8]{Zhikov2000},\cite{Zhikov2004}, involves the subset $\{\mu_n\}$ of the set of all Dirichlet eigenvalues of the Laplacian $-\Delta=-\nabla\!\cdot\!\nabla$ in $P$ for which the means $\av{\phi_n}{P}$ of the corresponding eigenfunctions $\phi_n$ do not vanish,
\begin{equation}\label{dispersion0}
  k^2 = \text{const}\cdot\frac{\omega^2}{c^2}\sum_{n=1}^\infty \frac{\mu_n\av{\phi_n}{P}^2}{\mu_n-\epsr\omega^2/c^2}.
\end{equation}
Here, $k$ is the wavenumber and the constant depends on the direction of the wavevector.
The relation is graphed in Fig.~\ref{fig:DispersionPos}.  It reveals a sequence of spectral bands $[\mu^*_n,\mu_n]$ characterized by those points on the abscissa for which the right-hand side of \eqref{dispersion0} is nonnegative.

In the two-dimensional application to electromagnetic fields, the $\mu_n$ correspond to resonances of an effective magnetic permeability of the homogenized medium, as shown by Bouchitt\'e and Felbacq \cite{BouchitteFelbacq2004,FelbacqBouchitte2005}.  Modes that are localized in defects within a high-contrast medium of this kind have been shown to converge spectrally by Kamotski and Smyshlyaev~\cite{KamotskiSmyshlyaev2006} as well as Cherdantsev~\cite{Cherdantse2009} using homogenization techniques.
Limits of spectra in high-contrast media have also been obtained when the value of $a$ in a periodic thin grating is small \cite{FigotinKuchment1996,FigotinKuchment1998} or when $a$ is small in the periodically dispersed regions enclosed by the grating
\cite{Zhikov2004}.

The remaining Dirichlet eigenvalues $\{\mu'_\ell\}$ of $-\Delta$ in $P$, that is, those whose eigenfunctions all have mean zero, also play a role in the small-$d$ limit of equation \eqref{Helmholtz0}.
Hempel and Lienau \cite{HempelLienau2000} proved that the spectra of the ``double porosity" operators $A_d = -\nabla\cdot a_d(\xx)\nabla U(\xx)$ converge, as $d\to0$, to the union of the bands $[\mu^*_n,\mu_n]$ plus those eigenvalues $\mu'_\ell$ that fall within the gaps between these bands.  (In \cite{HempelLienau2000}, the authors actually deal with a fixed period and a material coefficient $a$ that is fixed in the inclusions $P\!+\!m$ and tends to infinity in the host material; by a scaling of $a$ and $\xx$, the spectrum is seen to be equal to that of $A_d$.)  Zhikov deals directly with the double porosity operator and identifies the limit of the spectra as $d\to0$ in the sense of Hausdorff with the spectrum of a limiting operator $A$ in the sense of two-scale convergence \cite[\S3]{Zhikov2004}, where the two scales are $\xx$ and $\yy=\xx/d$.

At values of $\ew=\gamma\omega^2/c^2$ belonging to a branch of relation \eqref{dispersion0}, the equation $(A-\ew)u=0$ admits plane-wave solutions in $x$ with a computable periodic microscopic variation in $y\in\QQ$ which is constant and nonzero in the host $\Pbar$ and variable in the inclusion $P$.
On the other hand, at any eigenvalue $\ew=\mu_\ell$ or $\ew=\mu'_\ell$, $(A-\ew)u=0$ has solutions that are supported in the inclusions alone and vanish in the host.  If $\mu'_\ell$ falls in a gap of the homogenized dispersion relation, one views it as a band that has degenerated to a point, and if $\mu'_\ell$ falls in a band of the homogenized dispersion relation, one views it as a gap that has degenerated to a point (see \cite{HempelLienau2000} for a discussion of this).  Generically, eigenfunctions have nonzero mean, but for the important example of a circle (as illustrated in Fig.~\ref{fig:DispersionPos}) or a sphere, all but the radially symmetric eigenfunctions have zero mean.

\begin{figure}[t]
\centerline{\scalebox{0.3}{\includegraphics{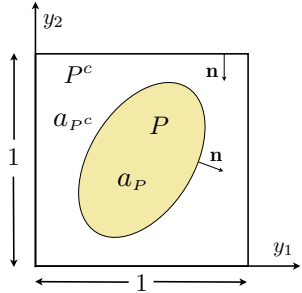}}}
\caption{\small The unit cell in $\RR^2$ with a high-contrast inclusion.}
\label{fig:inclusion}
\end{figure}
%

\begin{figure}[h]
\centerline{\scalebox{0.7}{\includegraphics{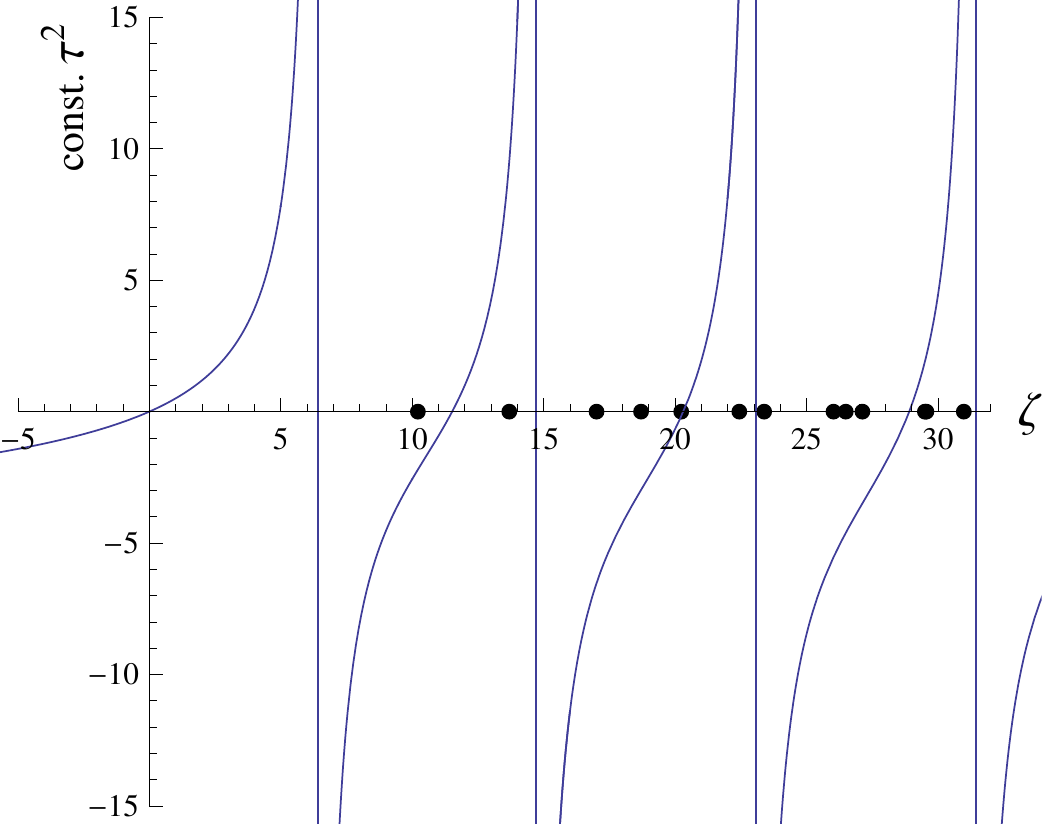}}}
\caption{\small The dispersion relation of the homogenized medium for an array of circles of radius 0.375 with positive material coefficient ($\ek^2=\gamma k^2$ and $\ew=\gamma\omega^2/c^2$ are the reduced square wavenumber and frequency).  The asymptotes occur at the eigenvalues ${\mu_n}$ of $-\Delta$ that correspond to eigenfunctions $J_0(\mu_n r)$, the spectral bands are the intervals $[\mu^*_n,\mu_n]$ where the function is positive, and the dots indicate the other eigenvalues $\mu'_\ell$.}
\label{fig:DispersionPos}
\end{figure}

The guiding principle in the positive case is this:

\smallskip
{\em
For each wavevector $k\kk$ ($|\kk|=1$), there exists a unique number $\ew=\gamma\omega^2/c^2\in[\mu^*_n,\mu_n]$ in the $n^\text{th}$ spectral band defined by \eqref{dispersion0}.  If $\ew$ is not a Dirichlet eigenvalue for $-\Delta$ in $P$, then, for $\eta=kd$ sufficiently small, the structure supports a Bloch wave with wavevector $k\kk$ and square frequency $\omega^2=c^2\ew^\eta/\epsr$.  As $\eta\to0$, $\ew^\eta$ converges to $\ew$ and the Bloch wave converges weakly to a plane wave with wavevector $k\kk$ and square frequency $\omega^2=c^2\ew/\epsr$ and has a computable strong two-scale limit.  {\bfseries The Bloch wave and the frequency $\ew^\eta$ at nonzero values of $\eta$ are given by convergent power series in $\eta$.}
On the other hand, for each Dirichlet eigenvalue $\mu'_{\ell}$ for $-\Delta$ in $P$ for which all eigenfunctions have mean zero,
and for each $k\kk$, the structure supports a Bloch wave whose (scaled) square frequency $\ew^\eta$ is given to leading order by $\mu'_{\ell}$ as $\eta\rightarrow 0$.
}

\smallskip

The demonstration of the convergent power series representation of the Bloch fields and the frequency, for both positive and negative coefficient in $P$, is the subject of this work.  The case of negative material coefficient in $P$ is simpler, as the homogenized dispersion relation has only one branch (Fig.~\ref{fig:DispersionNeg}) and the eigenvalues of $P$ do not play a role.  We present all details for the case of positive coefficient for which the leading term $\ew_0$ of the power series for $\ew^\eta$ lies in a band of the homogenized dispersion relation and does not coincide with $\mu'_\ell$.  In the case that $\ew_0=\mu'_\ell$ for some $\ell$, we allow the remark at the end of section \ref{sec:sequence} to suffice.

Our approach to this problem is to expand the frequency and field in $\eta$ and then to prove convergence of the resulting power series to a dispersion relation for the structure and the associated Bloch fields {\em for nonzero values of $\eta$} within some radius of convergence.  The analysis extends our recent work on high-contrast plasmonic inclusions \cite{Fortes2010,FortesLiptonShipman2010}, in which we establish convergent power series representations for fields and dispersion relations in sub-wavelength plasmonic crystals (dispersions of rods consisting of a plasma material).  An advantage of the method is that {\em it does not rely on coercivity of the two-scale operator}, and one is therefore able to prove existence of solutions for structures with {\em large negative contrast} in the material coefficient.  The advantage of a power series solution is that it provides an approximation of the true solution up to any algebraic order in $\eta$.
For certain coercive problems, it has been shown that the formal power-series expansion is an asymptotic series for the solution \cite{SmyshlyaevCherednich2000,Panasenko2009} and even that suitable truncation yields exponentially small error \cite{KamotskiMatthiesSmyshlyaev2007}.

Convergent series for fields in composites with highly contrasting complex conductivity have been obtained by Bruno \cite{Bruno1991}.  As far back as 1890, Hermann Schwarz \cite{Schwarz1890} obtained a power series solution of the equation $\Delta u + \lambda pu=0$ in a bounded domain, where $p(x)>0$.  The expansion parameter is~$\lambda$, and the solution has a radius of convergence equal to the first eigenvalue.  In fact, he also represents the first eigenfunction at the radius of convergence by a convergent sequence.

\smallskip
Our strategy is as follows.
\begin{enumerate}
  \item Fix a wavevector $k\kk$ and expand the field and frequency in power series in $\eta=kd$ to obtain an infinite coupled sequence of PDEs for the coefficients.
  \item Solve the first few equations to obtain the homogenized (quasi-static) dispersion relation and fields.
  \item Choose a branch of the homogenized dispersion relation, which corresponds to a unique value of $\ew$ and iteratively solve the infinite sequence of PDEs for the coefficients of the power series.  In the positive case, it can observed explicitly how solubility of the system is obstructed by secular terms in the expansion if $\ew$ is a Dirichlet eigenvalue of $-\Delta$ in the inclusion $P$.
  \item Obtain a nonzero radius of convergence of the series by establishing an exponential bound on the coefficients.  This is achieved through the use of systems of generating functions.  The radius is uniform over $k$ in the negative case, and it is uniform in the positive case if $\ew$ is bounded away from the eigenvalues $\mu'_\ell$.
  \item Prove that the power series solve the field equation and that the odd coefficients of the series for the dispersion relation vanish.
\end{enumerate}

{\slshape {\bfseries A remark} on proof of convergence and comparison with plasmonic crystals.} \
In \cite{FortesLiptonShipman2010}, we obtained power-series solutions for the dispersion relation and the associated Bloch waves for two-dimensional plasmonic crystals, in which the plasma frequency $\omegap$ tends to infinity as the inverse of the cell width.  Following the scaling $\omegap=c/d$ of Shvets and Urzhumov \cite{ShvetsUrzhumov2005}, the small $\eta$ regime leads to a large negative dielectric coefficient in the cylindrical plasmonic inclusions $\epsilon_P = 1-1/(\xi^2\eta^2)$, where $\xi = \omega^2/(c^2k^2)$.
In that work, we use fine properties of the Catalan numbers to obtain quantitative values for a lower bound on the radius of convergence \cite[\S6bc]{FortesLiptonShipman2010} by obtaining an explicit exponential bound on the norms of the coefficients of the expansions.   Special properties of the two-scale limit of the Bloch fields that were available in the plasmonic case but not in the case of positive contrast also played a crucial role in the analysis.
The method of generating functions that we use in the present work is simpler and applicable as well to the case of plasmonic crystals, although it does not provide a way to estimate the radius of convergence.  This approach can in fact be applied quite generally when one is able to obtain recursive bounds on the solutions of the higher-order cell problems, as, for example, for arrays of micro-resonators \cite{Shipman2010a}.
In \cite{FortesLiptonShipman2010}, we assumed symmetry of the inclusion under a rotation of $180^\circ$.  We do not make any symmetry assumption in the present paper, and our analysis here eliminates the need for this assumption for plasmonic crystals as well.

\section{Formal power series and dispersion relations} 

We begin by fixing a wavenumber and introducing power series expansions of a Bloch field and its frequency in the parameter $\eta=dk$.  Solving for the first two coefficient fields and then imposing the solvability condition at the next order in the matrix material leads to the homogenized dispersion relation.

\subsection{Long-wavelength Bloch waves}

The Helmholtz equation \eqref{Helmholtz0} can be written as the system
\begin{equation}\label{systemx}
  \renewcommand{\arraystretch}{1.2}
\left.
  \begin{array}{cl}
    \apbar\Delta_\xx U + \frac{\omega^2}{c^2}U = 0 & \text{in the matrix,} \\
    \ap\Delta_\xx U + \frac{\omega^2}{c^2} U = 0 & \text{in the inclusion,} \\
    \apbar\gradx U|_{\Pbar} \cdot\nn = \ap\gradx U|_P \cdot\nn & \text{on the interface.}
  \end{array}
\right.
\end{equation}
We investigate Bloch-wave solutions---solutions that are periodic at the microscopic scale and modulated by a plane wave at the macroscopic scale,
\begin{equation*}
  U(\xx) = u(\yy) e^{ik\kk\cdot\xx}, \quad \xx = d\,\yy,
\end{equation*}
in which $u(\yy)$ has the unit cube $\QQ=[0,1]^n$ as a period cell.
Apart from the definitions of $\ap$ and $\apbar$, the equations for $u(\yy)$ explicitly depend only on the unit vector $\kk$, the ratio $\omega^2/(k^2c^2)$, and the parameter $\eta$,
\begin{equation}\label{systemy}
  \renewcommand{\arraystretch}{1.2}
\left\{
  \begin{array}{ll}
    \apbar\,k^2(\Delta_\yy + \eta\, 2i\kk\cdot\grady - \eta^2)u + \eta^2\frac{\omega^2}{c^2} u = 0, & \yy\in \Pbar, \\
    \ap\,k^2(\Delta_\yy + \eta\, 2i\kk\cdot\grady - \eta^2)u + \eta^2\frac{\omega^2}{c^2} u = 0, & \yy\in P, \\
    \apbar(\grady+i\eta\kk)u|_{\Pbar}\cdot\nn = \ap(\grady+i\eta\kk)u|_P\cdot\nn, & \yy\in \partial P.
  \end{array}
\right.
\end{equation}
At this point, one can consider different asymptotic regimes depending on how the coefficients are defined.  In this work, we take $\apbar$ to be unity and $\ap$ to be proportional to the area of the unit cell.
\begin{equation}
  \apbar = 1, \quad \ap = \frac{d^2}{\epsr} = \frac{\eta^2}{\epsr k^2},
\end{equation}
where $\epsr$ is a constant carrying the dimension of area.  It is also convenient to introduce the non-dimensional wavenumber $\ek$ and square frequency $\ew$ by
\begin{equation*}
  \ek^2 = \epsr k^2, \qquad \ew = \epsr\frac{\omega^2}{c^2},
\end{equation*}
which transforms the system \eqref{systemy} into
\begin{equation}\label{master}
  \renewcommand{\arraystretch}{1.2}
\left\{
  \begin{array}{ll}
    \ek^2 (\Delta + \eta\, 2i\kdotgrad - \eta^2)u + \eta^2\ew u = 0, & \yy\in \Pbar, \\
    (\Delta + \eta\, 2i\kdotgrad - \eta^2)u + \ew u = 0, & \yy\in P, \\
    \ek^2 (\nabla+i\eta\kk)u\at{\Pbar}\cdot\nn = \eta^2(\nabla+i\eta\kk)u\at{P}\cdot\nn, & \yy\in \partial P.
  \end{array}
\right.  
\end{equation}

\subsection{Power series expansions}

\noindent
By inserting the power series ansatz
\begin{eqnarray}
  && u^\eta = u_0 + \eta u_1 + \eta^2 u_2 + \cdots \label{ueta}\\
  && \ew^\eta = \ew_0 + \eta\ew_1 + \eta^2\ew_2 + \cdots\label{xieta}
\end{eqnarray}
into the system \eqref{master},
one obtains a system of coupled partial differential equations for the coefficients, 
\begin{equation}\label{hmstrong}
  \renewcommand{\arraystretch}{1.5}
\left\{
  \begin{array}{ll}
    \ek^2(\Delta u_m + 2i\kdotgrad u_{m-1} - u_{m-2}) + \sum_{\ell=0}^{m-2} \ew_\ell u_{m-2-\ell}=0 & \text{in $\Pbar$,} \\
    \Delta u_m + 2i\kdotgrad u_{m-1} - u_{m-2} + \sum_{\ell=0}^{m} \ew_\ell u_{m-\ell}=0 & \text{in $P$,} \\
    \ek^2 (\nabla u_m + i\kk u_{m-1})\at{\Pbar}\cdot\nn
       = (\nabla u_{m-2} + i\kk u_{m-3})\at{P}\cdot\nn & \text{on $\dP$,}
  \end{array}
\right.
\end{equation}
in which $u_m\equiv0$ for $m<0$.
%
%
For the function $u_0$ in the matrix $\Pbar$, \eqref{hmstrong} yields the BVP
\begin{equation*}
\renewcommand{\arraystretch}{1.2}
\left.
  \begin{array}{ll}
    \Delta u_0 = 0 & \text{ in } \Pbar, \\
    \nabla u_0\cdot\nn =0 & \text{ on } \partial P,      
  \end{array}
\right.
\end{equation*}
from which we infer that $u_0$ is a constant, which we denote by $\bar u_0$, in $\Pbar$.
The system \eqref{hmstrong} yields the following boundary-value problem for $u_0$ in $P$:
\begin{equation}\label{h0strong}
  \renewcommand{\arraystretch}{1.4}
\left.
  \begin{array}{rl}
    \Delta u_0 + \ew_0 u_0 = 0 & \text{ in } P, \\
    u_0\at{P} = \overline{u}_0 & \text{ on } \partial P.
  \end{array}
\right.
\end{equation}
At this stage \eqref{h0strong} presents us with an alternative: either $\ew_0$ coincides with a Dirichlet eigenvalue, in which we take $\bar u_0=0$ if $\ew_0=\mu'_\ell$ for some $\ell$; otherwise $\bar u_0\not=0$.
In the former case, whether $\ew_0$ falls within a spectral band or a spectral gap, one obtains a power series for a Bloch wave for each wavevector (see the remark in section \ref{sec:sequence}) for which the leading-order term in the expansion of $\ew^\eta$ is given by $\mu'_\ell$.
In the latter case, the possible frequencies $\ew_0$ are determined by $k\kk$.
At this point, we make the assumption that $\ew_0$ is not an eigenvalue and that $\bar u_0\not=0$, which means that we are seeking a power series for a field that nonzero in the matrix.  


It is convenient to work with the dimensionless fields $\psi_m$ defined through
\begin{equation*}
  u_m = i^m\bar u_0\psi_m,
\end{equation*}
so that $\psi_0=1$ in $\Pbar$ and the system \eqref{hmstrong} becomes
\begin{equation}\label{psimstrong}
  \renewcommand{\arraystretch}{1.5}
\left\{
  \begin{array}{ll}
    \ek^2(\Delta\psi_m + 2\kdotgrad\psi_{m-1} + \psi_{m-2}) - \sum_{\ell=0}^{m-2} (-i)^\ell\ew_\ell\,\psi_{m-2-\ell}=0
                                   & \text{in $\Pbar$,} \\
    (\Delta+\ew_0)\psi_m + 2\kdotgrad\psi_{m-1} + \psi_{m-2} + \sum_{\ell=1}^{m} (-i)^\ell\ew_\ell\,\psi_{m-\ell}
    =0 
                                   & \text{in $P$,} \\
    \ek^2 (\nabla\psi_m + \kk\psi_{m-1})\at{\Pbar}\cdot\nn
       = -(\nabla\psi_{m-2} + \kk\psi_{m-3})\at{P}\cdot\nn & \text{on $\dP$}.
  \end{array}
\right.
\end{equation}
Notice that the equation for $\psi_m$ in $P$ is a Dirichlet boundary-value problem for the operator $\Delta+\ew_0$ with boundary data equal to the trace of $\psi_m$ in $\Pbar$ and interior forcing coming from
$\psi_\ell$ in $P$ for $\ell\!<\!m$ and $\ew_\ell$ for $\ell\leq m$.  The equation for $\psi_m$ in $\Pbar$, on the other hand, is a Neumann boundary-value problem for the Laplacian with periodic conditions on the boundary of the cube $\QQ$ and normal derivative specified by those of $\psi_{m-1}$ in $\Pbar$ and $\psi_\ell$ in $P$ for $\ell\leq m\!-\!2$ and interior forcing coming from $\psi_\ell$ in $\Pbar$ for $\ell\!<\!m$ and $\ew_\ell$ for $\ell\leq m\!-\!2$.

The system \eqref{psimstrong} yields the following boundary-value problem for $\psi_0$ in $P$:
\begin{equation}\label{psi0strong}
  \renewcommand{\arraystretch}{1.4}
\left.
  \begin{array}{rl}
    \Delta \psi_0 + \ew_0\psi_0 = 0 & \text{ in } P, \\
    \psi_0\at{P} = 1 & \text{ on } \partial P.
  \end{array}
\right.
\end{equation}
This problem has a unique solution if $\ew_0$ is not a Dirichlet eigenvalue of $-\Delta$ in~$P$.  The equation for $\psi_1$ in $\Pbar$~is
\begin{equation}\label{psi1strong}
  \renewcommand{\arraystretch}{1.4}
\left.
  \begin{array}{rl}
    \Delta\psi_1 = 0 & \text{ in } \Pbar, \\
    \nabla\psi_1\!\cdot\!\nn \,+\, \kk\!\cdot\!\nn = 0 & \text{ on } \partial P,
  \end{array}
\right.
\end{equation}
which has a unique solution subject to the mean-zero condition
\begin{equation*}
  \int_{\Pbar} \psi_1 = 0.
\end{equation*}
The equation for $\psi_2$ in $\Pbar$ is
\begin{equation*}
  \renewcommand{\arraystretch}{1.3}
\left.
  \begin{array}{rl}
    \ek^2(\Delta\psi_2 + 2\kdotgrad\psi_1 + 1) - \ew_0 = 0 & \text{ in }\Pbar, \\
    \ek^2(\nabla\psi_2 + \kk\psi_1)\at{\Pbar}\!\cdot\!\nn = -\nabla\psi_0\at{P}\!\cdot\!\nn
       & \text{ on }\partial P,
  \end{array}
\right.
\end{equation*}
which is a Neumann problem whose solvability is subject to the condition
\begin{equation}\label{solvability2}
  \ek^2\int_\Pbar (\kdotgrad\psi_1 + 1) - \ew_0\!\int_\QQ \psi_0 = 0\,.  \qquad \text{(solvability)}
\end{equation}
This is implicitly the leading order of the dispersion relation, giving $\ek^2$ as a function of $\ew_0$ (equivalently, $k$ as a function of $\omega_0$).  Indeed, the coefficient multiplying $\ek^2$ is nonzero:  By \eqref{psi1strong}, we obtain
\begin{equation*}
  \int_\Pbar|\nabla\psi_1|^2 = -\int_\Pbar\kk\cdot\nabla\psi_1
  < |\Pbar|^\half \Big( \int_\Pbar |\nabla\psi_1|^2 \Big)^\half,
\end{equation*}
in which the inequality is strict because, since $\psi_1(\yy)$ is periodic and $\kk\cdot\yy$ is not, $\nabla\psi_1$ is not a multiple of $\kk$.
Thus, $\int_\Pbar|\nabla\psi_1|^2 < |\Pbar|$ and we obtain
\begin{equation*}
  \int_\Pbar(\kk\!\cdot\!\nabla\psi_1+1) = \int_\Pbar \kk\!\cdot\!\nabla\psi_1 \,+\, |\Pbar|
  = -\int_\Pbar |\nabla\psi_1|^2 + |\Pbar| > 0.
\end{equation*}

Following Zhikov \cite{Zhikov2000}, we can make the dispersion relation explicit by writing the solution $\psi_0$ in $P$ in terms of the Dirichlet spectral data,
\begin{equation}\label{psi0}
  \psi_0(\yy) \,=\, 1 + \ew_0\sum_{n=1}^\infty \frac{\av{\phi_n}{P}}{\mu_n-\ew_0}\,\phi_n(\yy)
    \,=\, \sum_{n=1}^\infty \frac{\mu_n\av{\phi_n}{P}}{\mu_n-\ew_0}\,\phi_n(\yy),
  \quad \yy\in P,
\end{equation}
in which $\mu_n$ are those Dirichlet eigenvalues of $-\Delta$ in $P$ whose $L^2$-normalized eigenfunctions $\phi_n$ appear in the expansion of the constant function in $P$ and thus have nonzero mean,
\begin{equation*}
  \av{\phi_n}{P} = \int_P \phi_n(\yy)d\yy \not=0.
\end{equation*}
The solvability condition \eqref{solvability2} becomes
\begin{equation}\label{dispersion}
  \ek^2\!\int_\Pbar(\kdotgrad\psi_1+1) = \ew_0\sum_{n=1}^\infty \frac{\mu_n\av{\phi_n}{P}^2}{\mu_n-\ew_0}.
\end{equation}
As observed by Zhikov \cite{Zhikov2000,Zhikov2004}, this relation, shown in Fig.~\ref{fig:DispersionPos} defines an infinite sequence $\{\ew_0^{(n)}, n=0,1,2,\dots\}$ of values of $\ew_0$ such that
\begin{equation*}
  \mu_n < \mu^*_n \leq \ew_0^{(n)} < \mu_{n+1}.
\end{equation*}
The functions $\ew_0^{(n)}=\ew_0^{(n)}(\kk,\ek)$ of $\kk$ and $\ek$ define the leading order in $\eta$ of an infinite sequence of branches of the dispersion relation and reveals a sequence of stop and pass bands for a homogenized medium.   A convergent series representation for $\ew^{(n)}$ for $0\leq\eta<R_n(\ek)$,
\begin{equation*}
  \ew^{(n)} = \ew_0^{(n)} + \eta \ew_1^{(n)} + \eta^2 \ew_2^{(n)} + \dots,
\end{equation*}
will be proved in section~\ref{subsec:convergence} subject to the condition that $\ew_0^{(n)}\not=\mu'_\ell$ for all $\ell$.

\subsection{Negative material coefficient}

If we take the material coefficient in the inclusion to be negative,
\begin{equation*}
  \ap = -\frac{d^2}{\epsr} = -\frac{\eta^2}{\epsr k^2},
\end{equation*}
the plus sign before the $\ew_0$ in the system \eqref{psi0strong} becomes a minus sign, whereas the solvability condition \eqref{solvability2} remains unaltered.  The resulting dispersion relation is
\begin{equation}
  \ek^2\!\int_\Pbar(\kdotgrad\psi_1+1) = \ew_0\sum_{n=1}^\infty \frac{\mu_n\av{\phi_n}{P}^2}{\mu_n+\ew_0},
\end{equation}
which differs from that for positive $\ap$ only by the plus sign in the denominator.  The graph of this relation is obtained from Fig.~\ref{fig:DispersionPos} by reflection about the origin, and the result is that there is only a single dispersion relation that passes through the origin (Fig.~\ref{fig:DispersionNeg}), as we require that $\ew=\gamma\omega^2/c^2$ be positive.

\begin{figure}[h]
\centerline{\scalebox{0.7}{\includegraphics{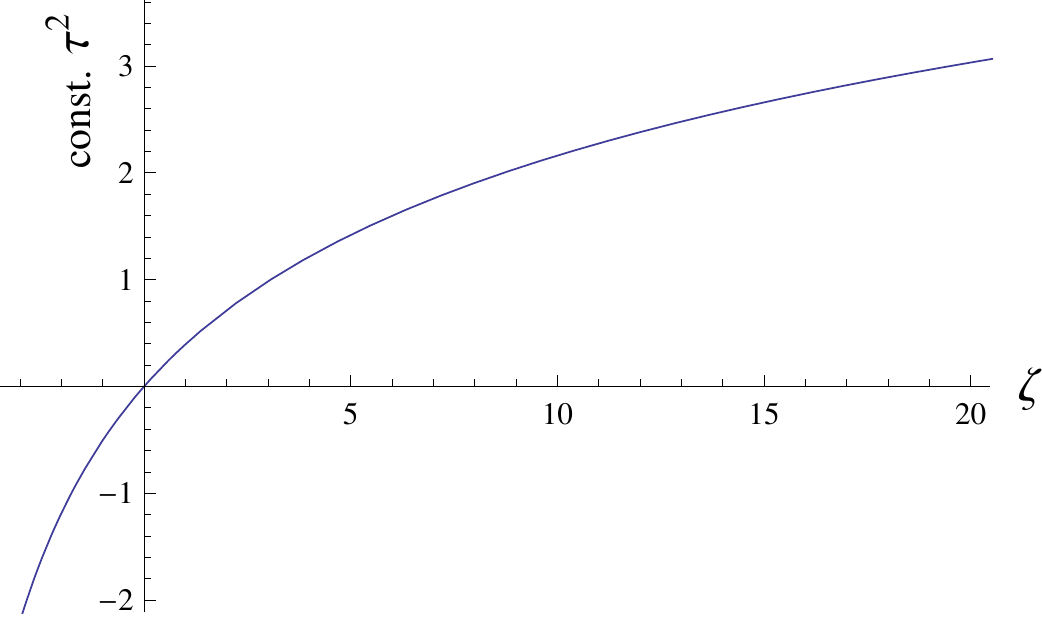}}}
\caption{\small The dispersion relation for the homogenized medium for an array of circles of radius 0.375 with negative material coefficient.  All positive values of $\ew$ are in the spectrum.}
\label{fig:DispersionNeg}
\end{figure}

\section{Solution of the sequence of cell problems}\label{sec:sequence}

We shall develop the mathematics and state the theorems in the case of positive material coefficient in the inclusion, which is, in fact, the more complicated situation because of the resonances $\{\mu'_n\}$.  In section \ref{sec:negative}, we discuss the modifications that need to be made in the case of negative material coefficient.

When solving iteratively for the coefficients of the field and frequency, we do not assume {\it a priori} that the coefficients of the frequency are real.  After proving that the series converges and solves the Helmholtz equation, we then prove {\it a posteriori} that the coefficients are real and that the odd ones vanish.

The weak form of the cell problem for $u$ is
\begin{multline}\label{masterweak}
  \ek^2\!\int_\Pbar \left( \nabla u\cdot\nabla\bar v + i\eta\kk\cdot(u\nabla\bar v-\bar v\nabla u) + \eta^2u\bar v \right)
  -\eta^2\ew\!\int_\Pbar u\bar v \;+\\
  + \eta^2\!\int_P \left( \nabla u\cdot\nabla\bar v + i\eta\kk\cdot(u\nabla\bar v-\bar v\nabla u) + \eta^2u\bar v \right)
  -\eta^2\ew\!\int_P u\bar v \,=\,0
  \qquad \forall\,v\in\Honeper(\QQ),
\end{multline}
and the weak form of the sequence of cell problems \eqref{psimstrong} for $\psi_m$ is
\begin{multline}\label{Ebar}
  \ek^2\!\int_{\Pbar} \nabla\psi_m\cdot\nabla\bar v
  \,+\, \ek^2\!\int_{\Pbar}\left[ \kk\psi_{m-1}\cdot\nabla\bar v -\left(\kk\cdot\nabla\psi_{m-1}+\psi_{m-2}\right)\bar v \right] + \int_\QQ \sum_{\ell=0}^{m-2} (-i)^\ell\ew_\ell\,\psi_{m-2-\ell}\,\bar v \,+ \\
  -\int_P\nabla\psi_{m-2}\cdot\nabla\bar v
  -\int_P \left[ \kk\psi_{m-3}\cdot\nabla\bar v - \left(\kk\cdot\nabla\psi_{m-3}+\psi_{m-4}\right)\bar v\right] \,=\, 0 \qquad
  \forall\; v\in \Honeper(\QQ),
\end{multline}
in which $\psi_m$ and $\ew_m$ are set to zero for $m<0$.

\smallskip
\noindent
{\bfseries\slshape Solving for $\psi_m$ in $P$.}\;
By restricting to test functions $v$ with support in $P$ and putting $m\mapsto m+2$ in \eqref{Ebar}, one obtains a Dirichlet boundary-value problem for $\psi_m$ in $P$, in terms of the boundary values of $\psi_m$ in $\Pbar$:
\begin{equation}\label{E}
  \renewcommand{\arraystretch}{1.5}
\left\{
  \begin{array}{l}
    \displaystyle \int_P\left( -\nabla\psi_m\cdot\nabla\bar v + \ew_0\,\psi_m\bar v \right)
    -\int_P\left[ \kk\psi_{m-1}\cdot\nabla\bar v -\left( \kk\cdot\nabla\psi_{m-1}+\psi_{m-2} \right)\bar v \right] \,+\hspace{5em}\\
    \displaystyle \hfill +\int_P \sum_{\ell=1}^{m} (-i)^\ell \ew_\ell\,\psi_{m-\ell}
    \,\bar v \,=\, 0
    \qquad \forall\; v\in H^1_0(P),\\
    \psi_m|_{\dP^-} = \psi_m|_{\dP^+}.
  \end{array}
\right.
\end{equation}
For $m\geq1$, it is convenient to decompose $\psi_m$ in $P$ as follows (we define $\tilde\psi_0=0$):
\begin{equation}
  \psi_m = \tilde\psi_m + (-i)^m\ew_m\psi_*,
\end{equation}
in which $\tilde\psi_m$ solves a Dirichlet problem involving $\psi_m$ in $\Pbar$, $\psi_\ell$ in $P$ for $\ell<m$, and $\ew_\ell$ for $\ell<m$,
\begin{equation}\label{Etilde}
  \renewcommand{\arraystretch}{1.5}
\left\{
  \begin{array}{l}
    \displaystyle \int_P\left( -\nabla\tilde\psi_m\cdot\nabla\bar v + \ew_0\,\tilde\psi_m\bar v \right)
    -\int_P\left[ \kk\psi_{m-1}\cdot\nabla\bar v -\left( \kk\cdot\nabla\psi_{m-1}+\psi_{m-2} \right)\bar v \right] \,+\hspace{5em}\\
    \displaystyle \hfill +\int_P \sum_{\ell=1}^{m-1} (-i)^\ell \ew_\ell\,\psi_{m-\ell}
    \,\bar v \,=\, 0
    \qquad \forall\; v\in H^1_0(P),\\
    \tilde\psi_m|_{\dP^-} = \psi_m|_{\dP^+}.
  \end{array}
\right.
\end{equation}
and $\psi_*$ is independent of $m$ and satisfies
\begin{equation}\label{Estar}
  \renewcommand{\arraystretch}{1.5}
\left\{
  \begin{array}{l}
  \displaystyle \int_P\left( -\nabla\psi_*\cdot\nabla\bar v + \ew_0\,\psi_*\bar v \right)
  + \int_P\psi_0\bar v \,=\, 0 \qquad \forall\; v\in H^1_0(P),\\
   \psi_*|_{\dP} = 0.
  \end{array}
\right.
\end{equation}
The strong form of this equation is
\begin{equation*}
\renewcommand{\arraystretch}{1.3}
\left.
  \begin{array}{ll}
      (\Delta + \ew_0)\psi_* = -\psi_0 & \text{ in } \; P, \\
      \psi_* = 0 & \text{ on } \; \dP,
  \end{array}
\right.
\end{equation*}
and, given the expression \eqref{psi0} for $\psi_0$, one can always solve explicitly for $\psi_*$,
\begin{equation}\label{psistar}
  \psi_* = \sum_{n=1}^\infty \frac{\mu_n\,\av{\phi_n}{P}}{(\mu_n-\ew_0)^2}\,\phi_n\,
  \qquad \text{in $P$.}
\end{equation}
On the other hand, equation \eqref{Etilde} for $\tilde\psi_m$ can typically only be solved if $\ew_0\not=\mu'_\ell$ for all $\ell$, and this is precisely where this assumption plays its role.

\smallskip
\noindent
{\bfseries\slshape Solving for $\psi_m$ in $\Pbar$.}\;
The field $\psi_m$ in $\Pbar$ is to be determined by \eqref{Ebar} in terms of $\psi_\ell$ in $\Pbar$ for $\ell<m$, $\psi_\ell$ in $P$ for $\ell\leq m-2$, and $\ew_\ell$ for $\ell\leq m-2$.  Solvability of this equation is subject to two conditions.

The first condition is obtained by setting $v=0$ in $\Pbar$, and is simply the problem \eqref{E} for $\psi_{m-2}$ in $P$.  More precisely, given $v\in\Honeper(\QQ)$, decompose $v=v_0+v_1$ according to the orthogonal decomposition of $\Honeper(\QQ)$,
\begin{equation*}
  \Honeper(\QQ) = H^1_0(P) + H^1_0(P)^\perp,
\end{equation*}
where functions in $H^1_0(P)$ are considered to reside in $\Honeper(\QQ)$ by means of extension by zero.  This means that $v_0=0$ in $\Pbar$ and $\Delta v_1=v_1$ in $P$.  Now, given that $\psi_{m-2}$ satisfies \eqref{E} in $P$, we arrive at a Neumann problem for $\psi_m$ in $\Pbar$ by replacing $v$ with $v_1$.

The second solvability condition is obtained by setting $v=v_1=1$ in $\QQ$,
\begin{multline}\label{D}
  (-i)^{m-2}\ew_{m-2}\left( \int_\QQ\psi_0 \,+\, \ew_0\!\int_P\psi_* \right)
  + \int_\QQ \sum_{\ell=1}^{m-3} (-i)^\ell\ew_\ell\,\psi_{m-2-\ell}
   \;\;+ \\
  +\, \ew_0\int_P\tilde\psi_{m-2} \,-\, \ek^2\int_{\Pbar}\left( \kk\cdot\nabla\psi_{m-1}+\psi_{m-2} \right)
  + \int_P\left( \kk\cdot\nabla\psi_{m-3} + \psi_{m-4} \right) \,=\, 0.
\end{multline}
Given that this condition is satisfied, one can solve for $\psi_m$ in $\Pbar$ up to an additive constant, and we will always take the unique solution that satisfies the zero-average condition
\begin{equation}\label{zeroaverage}
  \int_\Pbar \psi_m = 0.
\end{equation}

\smallskip
\noindent
{\bfseries\slshape Solving for $\ew_m$.}\;
Equation \eqref{D} is to be viewed as the relation that determines the term $\ew_{m-2}$ of the power series expansion of the dispersion relation.  Indeed, the quantity multiplying $\ew_{m-2}$ depends only on the inclusion, the values of $\ek$ and $\kk$, and the choice of $\ew_0$, and it is always nonzero:
\begin{equation*}
  \int_\QQ\psi_0 \,+\, \ew_0\!\int_P\psi_* \,=\, |\Pbar| + \int_P(\psi_0+\ew_0\psi_*) \,
  =\, |\Pbar| + \int_P\sum_{n=1}^\infty \frac{\mu_n^2\av{\phi_n}{P}}{(\mu_n-\ew_0)^2}\,\phi_n
  = |\Pbar| + \sum_{n=1}^\infty \frac{\mu_n^2\av{\phi_n}{P}^2}{(\mu_n-\ew_0)^2} \,>\, 0.
\end{equation*}
Later on, we will prove that $\ew_m=0$ when $m$ is odd.

\smallskip
In order to solve inductively for all of the fields $\psi_m$ and coefficients $\ew_m$, we will proceed from the established ``base case" consisting of $\psi_0$ in $\QQ$ and $\ew_0$, both of which have been specified above, as well as $\psi_1$ and $\psi_2$ in $\Pbar$.
In $\Pbar$, $\psi_1$ is determined by equation \eqref{Ebar} for $m=1$,
\begin{equation}\label{psi1weak}
  \renewcommand{\arraystretch}{1.3}
\left\{
  \begin{array}{l}
    \displaystyle
    \ek^2\int_\Pbar (\nabla\psi_1+\kk)\cdot\nabla\bar v \,=\, 0  \qquad \forall v\in\Honeper(\Pbar), \\
    \vspace{-2ex} \\
    \displaystyle \int_\Pbar \psi_1 \,=\, 0,
  \end{array}
\right.
\end{equation}
which has a unique zero-mean solution.  Likewise, $\psi_2$ is the unique zero-mean solution to
\begin{equation}\label{psi2weak}
  \renewcommand{\arraystretch}{1.3}
\left\{
  \begin{array}{l}
    \displaystyle
    \ek^2\!\int_\Pbar (\nabla\psi_2+\kk)\cdot\nabla\bar v \,-\, \ek^2\!\int_\Pbar (\kk\cdot\nabla\psi_1+1)\bar v
    \,+\, \ew_0\!\int_\QQ\psi_0\bar v \,- \int_P \nabla\psi_0\cdot\nabla\bar v
    \,=\, 0  \qquad \forall v\in\Honeper(\Pbar), \\
    \vspace{-2ex} \\
    \displaystyle \int_\Pbar \psi_2 \,=\, 0,
  \end{array}
\right.
\end{equation}
which is solvable because $\psi_0$ satisfies \eqref{psi0strong} in $P$ and because of the solvability condition \eqref{solvability2}, which the pair $(\ek,\ew_0)$ is assumed to satisfy.

\begin{theorem}
If $\ew_0\not=\mu'_\ell$ for all $\ell$, there are unique functions $\psi_m\in\Honeper(\QQ)$ and numbers $\ew_m\in\CC$ such that $\psi_m=0$ and $\ew_m=0$ for $m<0$ and such that
\begin{enumerate}
  \item the fields $\psi_0$, $\psi_1$, and $\psi_2$ in $\Pbar$ and $\psi_0$ in $P$ and the number $\ew_0$ coincide with those already defined;
  \item equations \eqref{Ebar} and \eqref{D} are satisfied for all $m$;
  \item $\displaystyle\int_\Pbar\psi_m=0$ for all $m$;
  \item $\psi_m=\tilde\psi_m + (-i)^m\ew_m\psi_*$ in $P$, where $\tilde\psi_m$ satisfies \eqref{Etilde}.
\end{enumerate}

\end{theorem}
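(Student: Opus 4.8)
The plan is to build the entire sequence $(\psi_m,\ew_m)_{m\ge 0}$ by strong induction, in the staggered order forced by the coupling in \eqref{Ebar}, \eqref{E}, and \eqref{D}, and to obtain uniqueness from the uniqueness of each cell problem solved along the way. The base data---$\psi_0$ in $\QQ$, $\ew_0$, and $\psi_1,\psi_2$ in $\Pbar$---are already fixed by \eqref{psi0strong}, \eqref{psi1weak}, and \eqref{psi2weak}, which gives item~1; recall that the solvability of \eqref{psi2weak} is precisely the assumption \eqref{solvability2} on $(\ek,\ew_0)$. The inductive hypothesis at stage $r\ge1$ is that $\psi_\ell$ in $\Pbar$ has been constructed for $\ell\le r+1$, the full inclusion fields $\psi_\ell$ in $P$ together with their Dirichlet parts $\tilde\psi_\ell$ for $\ell\le r-1$, and the coefficients $\ew_\ell$ for $\ell\le r-1$, all obeying \eqref{Ebar}, \eqref{D}, \eqref{zeroaverage}, and the decomposition of item~4 wherever these apply.

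Each inductive step would carry out four moves, every one using only already-constructed data. (i) Solve \eqref{Etilde} at level $r$ for $\tilde\psi_r$ in $P$: this Dirichlet problem for $\Delta+\ew_0$ is uniquely solvable exactly because $\ew_0$ is not a Dirichlet eigenvalue of $-\Delta$ in $P$, which is where the hypothesis $\ew_0\ne\mu'_\ell$ is used (the nonzero-mean eigenvalues $\mu_n$ are excluded automatically, since otherwise \eqref{psi0} would be undefined). (ii) Define $\ew_r$ from \eqref{D} at level $r+2$: the factor multiplying $\ew_r$ there equals $|\Pbar|+\sum_{n=1}^\infty \mu_n^2\av{\phi_n}{P}^2/(\mu_n-\ew_0)^2>0$, so $\ew_r$ is determined uniquely, and \eqref{D} at this level refers to no field of index exceeding $r+1$, so there is no circularity with the matrix field produced in move~(iv). (iii) Assemble the full field in the inclusion, $\psi_r=\tilde\psi_r+(-i)^r\ew_r\psi_*$ in $P$; since $\psi_*$ solves \eqref{Estar}, a direct comparison shows this field satisfies problem \eqref{E} at level $r$. (iv) Solve the Neumann problem \eqref{Ebar} at level $r+2$ for $\psi_{r+2}$ in $\Pbar$, fixing the free additive constant by \eqref{zeroaverage}.

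The two compatibility conditions for the Neumann problem in move~(iv) hold automatically: the first is nothing but problem \eqref{E} at level $r$, now satisfied by the full field assembled in move~(iii), and the second is \eqref{D} at level $r+2$, which was arranged in move~(ii) by the choice of $\ew_r$. After the four moves the hypothesis holds at stage $r+1$, so the induction runs, delivering items~2--4; membership $\psi_m\in\Honeper(\QQ)$ follows because the Dirichlet data $\tilde\psi_m|_{\dP^-}=\psi_m|_{\dP^+}$ built into \eqref{Etilde} makes each $\psi_m$ continuous across $\dP$ and periodic. Uniqueness is inherited step by step: at each stage $\tilde\psi_r$, $\ew_r$, and (given \eqref{zeroaverage}) $\psi_{r+2}$ in $\Pbar$ are each uniquely determined.

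I expect the main obstacle to be bookkeeping the dependencies so that every cell problem is posed only with previously constructed data and with its solvability conditions automatically met---in particular, checking that \eqref{D} at level $r+2$ fixes $\ew_r$ before \eqref{Ebar} is solved (so that the Neumann compatibility is in force) and that it involves no field of index above $r+1$. Once this ordering is verified, the analytic content reduces to the Fredholm alternative for the fixed operator $\Delta+\ew_0$ on $H^1_0(P)$ and for the periodic Neumann Laplacian on $\Pbar$, whose relevant invertibility has already been established above.
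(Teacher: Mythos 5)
Your proposal is correct and follows essentially the same route as the paper's proof: an induction whose step consists of solving \eqref{Etilde} for $\tilde\psi$, fixing $\ew$ from \eqref{D} two levels up, assembling the inclusion field via $\psi_*$, and then solving the Neumann problem \eqref{Ebar} in $\Pbar$ with both compatibility conditions verified in exactly the way you describe. Your stage-$r$ hypothesis is the paper's statement $S_n$ with $r=n-1$, so the two arguments coincide up to relabeling.
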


\smallskip
\begin{proof}
The proof is by induction on the following statement that depends on $n\geq2$:
 
 \smallskip
 \noindent
{\itshape
{Statement $S_n$:}\; For $n\geq2$, there exist
\begin{itemize}
  \item $\psi_m\in\Honeper(\QQ)$ for $m\leq n-2$ with $\int_\Pbar\psi_m=0$ if $m\geq1$,
  \item $\psi_m\in\Honeper(\Pbar)$ for $n-2<m\leq n$, with $\int_\Pbar\psi_m=0$,
  \item $\ew_m\in\CC$ for $m\leq n-2$,
\end{itemize}
that vanish for $m<0$ and coincide with $\psi_0$, $\psi_1$, and $\psi_2$ in $\Pbar$, $\psi_0$ in $P$, and $\ew_0$ as already defined, and such that \eqref{Ebar} is satisfied for $m\leq n$ and $\psi_m=\tilde\psi_m + (-i)^m\ew_m\psi_*$ in $P$ with $\tilde\psi_m$ satisfying \eqref{Etilde}.
} 
\smallskip

Statement $S_2$ holds by definition of $\psi_0$, $\psi_1$, and $\psi_2$ in $\Pbar$, $\psi_0$ in $P$, and $\ew_0$.  Given the validity of $S_m$ for $m\leq n$, we will prove $S_{n+1}$.

Notice that \eqref{Etilde}, with $m=n-1$ is an equation for $\tilde\psi_{n-1}$ with data that depend only on the functions $\psi_n$ and numbers $\ew_m$ already defined by the induction hypothesis.  Since $\ew_0$ is not a Dirichlet eigenvalue of $-\Delta$ in $P$, this problem has a unique solution $\tilde\psi_{n-1}$.  Let $\ew_{n-1}$ be defined by equation \eqref{D} with $m=n+1$.  Besides $\ew_m$, that expression depends only on the numbers $\ew_m$ and functions $\psi_m$ presumed in the induction hypothesis.  By definition of $\psi_*$, the function defined by $\psi_{n-1} := \tilde\psi_{n-1}+(-i)^{n-1}\ew_{n-1}\psi_*$ satisfies \eqref{E} in $P$ with $m=n-1$.  Moreover, since $\psi_*|_\dP=0$ and because of the boundary condition in \eqref{Etilde}, we have $\psi_{n-1}|_{\dP^-} = \psi_{n-1}|_{\dP^+}$, so that the functions $\psi_{n-1}$ in $\Pbar$ and $P$ together define a function $\psi_{n-1}$ that resides in $\Honeper(\QQ)$.

Now consider \eqref{Ebar} for $m=n+1$.  The two conditions of solvability are satisfied.  First, for all $v\in\Honeper(\QQ)$ with $v=0$ in $\Pbar$, this equation is satisfied by the definition of $\psi_{n-1}$.  Second, by setting $v=1$, one obtains the equation \eqref{D}, which holds because of the definition of $\ew_{n-1}$.  Thus \eqref{Ebar} for $m=n+1$ admits a unique solution $\psi_{n+1}\in\Honeper(\Pbar)$ subject to $\int_\Pbar \psi_{n+1} = 0$.

To complete the proof, we must show that \eqref{D} holds for all $m$.  This follows from setting $v=1$ in \eqref{Ebar}.
\end{proof}

\medskip
\noindent
{\bfseries Remark on interior eigenvalues}\label{subsec:eigenvalues}
\smallskip

\noindent
In the case that $\ew_0=\mu'_\ell$ for some $\ell$, we have seen that the higher-order cell problems in the foregoing analysis are not solvable if it is assumed that $u_0\not=0$ in the matrix.  Instead, one should set $u_0=0$ in $\Pbar$, which is in accordance with the fact that the eigenfunctions for the two-scale limit vanish in the matrix and are a multiple of the eigenfunction $\phi$ for $\mu'_\ell$ in the inclusion \cite{Zhikov2004}.  Now the leading-order term of the expansion of $\ew^\eta$ is fixed at $\mu'_\ell$, and one must solve for $\ew_m$ ($m>0$) and for all $\psi_m$.  Because $\ew_0$ is a Dirichlet eigenvalue in $P$, the solvability of $\psi_m$ is subject to a Fredholm condition both in $\Pbar$ (where one sets $v=1$) as well as in $P$ (where one sets $v=\phi$).  The solution is given up to an additive constant $\beta_m$ in $\Pbar$ and up to the addition of $\gamma_m\phi$ in $P$.  The numbers $\beta_m$, $\gamma_m$, and $\ew_m$ are determined by subsequent solvability conditions.  The procedure for solving for these constants and the fields $\psi_m$ inductively turns out to be quite complex, though not insurmountable, and we believe that an exponential bound can be obtained using the technique of generating functions that we illustrate in the next section.  For this communication, we are satisfied to concentrate on the case $\ew_0\not=\mu'_\ell$.

\section{Convergence of the power series}\label{sec:convergence}

Convergence of the formal power series for $u$ in $H^1(\QQ)$ and $\ew$ in $\CC$ is equivalent to an exponential bound on the norms $H^1$-norms of the functions $\psi_m$ and the numbers $\ew_m$.  An exponential bound is proved in Theorem~\ref{thm:convergence} below.
The iterative scheme for solving for $\psi_m$ and $\ew_m$ provides recursive upper bounds on their norms.  These bounds are based on the standard $H^1$ estimates obtained in the appendix.

\subsection{Bounds on the fields}

From \eqref{Ebar} and the bound \eqref{psiPbarbound} in Problem 2 of the appendix, with $K=\Omega^2_\Pbar\max\{1,A\}$, we obtain
\begin{multline}\label{est1}
  \ek^2 \|\psi_m\|_{H^1(\Pbar)}\leq K\bigg[\ek^2\left(2\|\psi_{m-1}\|_{H^1(\Pbar)} + \|\psi_{m-2}\|_{H^1(\Pbar)}\right) + \|\psi_{m-2}\|_{H^1(P)} + \\ 
  + 2\|\psi_{m-3}\|_{H^1(P)} + \|\psi_{m-4}\|_{H^1(P)} 
   + \sum_{\ell=0}^{m-2} |\ew_\ell| \left( \|\psi_{m-2-\ell}\|_{H^1(\Pbar)}+\|\psi_{m-2-\ell}\|_{H^1(P)} \right)  \bigg].
\end{multline}
From equation \eqref{Etilde} for $\tilde\psi$ and the bound \eqref{psiPbound}, we obtain
\begin{equation}\label{est2}
  \|\tilde\psi_{m}\|_{H^1(P)} \leq K_\ek\bigg[
   \|\psi_{m}\|_{H^1(\Pbar)} + 2\|\psi_{m-1}\|_{H^1(P)} + \|\psi_{m-2}\|_{H^1(P)}
   + \sum_{\ell=1}^{m-1}|\ew_\ell|\|\psi_{m-\ell}\|_{H^1(P)} \bigg],
\end{equation}
in which $K_\ek\leq C_1\ek^2 + C_2$ if $\ew_0$ is bounded away from the eigenvalues $\mu'_\ell$.

In \eqref{D}, the constant multiplying $\ew_{m-2}$ is at least $|\Pbar|$ for all $\ek$, and we obtain, by adjusting $K$ if necessary,
\begin{multline}\label{est3}
  |\ew_m| \leq K\bigg[  \ew_0\|\tilde\psi_{m}\|_{H^1(P)} +
  \ek^2\left( \|\psi_{m+1}\|_{H^1(\Pbar)}+\|\psi_{m}\|_{H^1(\Pbar)} \right) + \\
  + \|\psi_{m-1}\|_{H^1(P)}  + \|\psi_{m-2}\|_{H^1(P)} + \sum_{\ell=1}^{m-1} |\ew_\ell|\left( \|\psi_{m-\ell}\|_{H^1(P)}+\|\psi_{m-\ell}\|_{H^1(\Pbar)} \right)
  \bigg].
\end{multline}
From the decomposition $\psi_m = \tilde\psi_m + (-i)^m\ew_m\psi_*$ and the bound \eqref{psistarbound} on $\psi_*$, we obtain
\begin{equation}\label{est4}
  \|\psi_{m}\|_{H^1(P)} \leq \|\tilde\psi_{m}\|_{H^1(P)} + B_\ek |\ew_m|,
\end{equation}
in which $B_\ek=(B_1\ek^2+B_2)^2$.

\subsection{Proof of convergence}\label{subsec:convergence}

The foregoing analysis is most transparent using the quasistaticity $\eta$ as the expansion parameter, as we have done.  For the investigation of structures of fixed cell size with varying wavenumber $k$, it is more suitable to expand the fields and dispersion relation in powers of the cell size.  To this end, set
\begin{equation*}
  \rho = \frac{d}{\sqrt{\epsr}} = \frac{\eta}{\ek}.
\end{equation*}
Then we have the expansions
\begin{eqnarray*}
  u&=& u_0 + \rho\ek u_1 + \rho^2\ek^2 u_2 + \dots, \\
  \ew &=& \ew_0 + \rho\ek\ew_1 + \rho^2\ek^2\ew_2 + \dots.
\end{eqnarray*}
Recall the relation $u_m = i^m\bar u_0\psi_m$.

\begin{theorem}\label{thm:convergence}  
Let a branch $\ew_0=\ew_0(\kk,\ek)\in[\mu^*_m,\mu_m)$ of the homogenized dispersion relation be given.  For each $\ek\geq0$ such that $\ew_0\not=\mu'_\ell\,\forall\ell$,  there exist real positive numbers $C$ and $J$, independent of $\kk$, such that
\begin{equation}\label{bounds}
  \|\ek^m\psi_m\|_{H^1(\QQ)} < CJ^m \qquad\text{and}\qquad |\ek^m\ew_m| < CJ^m.
\end{equation}
For each $\ek_0>0$ and for each $\varepsilon>0$, $C$ and $J$ can be chosen such that \eqref{bounds} holds for all $\kk$ and for all $\ek$ such that $0\leq\ek\leq\ek_0$ and
\begin{equation}\label{nonresonance}
  \ew_0 \in [\mu^*_m,\mu_m)\setminus\bigcup\limits_{\ell}(\mu'_\ell-\varepsilon,\mu'_\ell+\varepsilon).
\end{equation}
If the band $[\mu_m^*,\mu_m)$ contains none of the $\mu'_\ell$, then $C$ and $J$ are independent of $\varepsilon$.

\end{theorem}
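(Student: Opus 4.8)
The plan is to prove the exponential bounds \eqref{bounds} by the method of generating functions. The strategy rests on the four recursive estimates \eqref{est1}--\eqref{est4}, which together express upper bounds for $\|\ek^m\psi_m\|_{H^1(\Pbar)}$, $\|\ek^m\tilde\psi_m\|_{H^1(P)}$, $|\ek^m\ew_m|$, and $\|\ek^m\psi_m\|_{H^1(P)}$ in terms of the analogous quantities at lower indices. First I would introduce the four generating functions
\begin{equation*}
  F(z) = \sum_{m\geq0}\|\ek^m\psi_m\|_{H^1(\Pbar)}\,z^m,\quad
  \tilde G(z) = \sum_{m\geq0}\|\ek^m\tilde\psi_m\|_{H^1(P)}\,z^m,\quad
  G(z) = \sum_{m\geq0}\|\ek^m\psi_m\|_{H^1(P)}\,z^m,\quad
  W(z) = \sum_{m\geq0}|\ek^m\ew_m|\,z^m,
\end{equation*}
and multiply each of \eqref{est1}--\eqref{est4} (after replacing $\psi_m$ by $\ek^m\psi_m$ and $\ew_m$ by $\ek^m\ew_m$ throughout, so that every term carries the same total power of $\ek$ matching its power of $z$) by $z^m$ and sum over $m$. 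Each shift $m\mapsto m-j$ becomes multiplication by $z^j$, and each convolution sum $\sum_\ell|\ew_\ell|\|\psi_{m-\ell}\|$ becomes a product of generating functions. The result is a finite system of four inequalities among $F$, $\tilde G$, $G$, and $W$ whose coefficients are polynomials in $z$ with nonnegative coefficients depending on the constants $K$, $K_\ek$, $B_\ek$, $\ew_0$, and $\ek$.

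Next I would compare these series against dominating series defined by the corresponding equalities. Replacing each inequality by an equality produces a closed algebraic system for majorant generating functions $\hat F,\hat{\tilde G},\hat G,\hat W$ with $F\le\hat F$ coefficientwise, and so on; since all coefficients are nonnegative, the majorant construction is legitimate. Eliminating variables reduces the system to a single algebraic equation for one generating function, of the form $\Phi(z,\hat F(z))=0$ where $\Phi$ is a polynomial. At $z=0$ the base-case data from Statement $S_2$ give $\hat F(0)$ a finite value, and $\partial\Phi/\partial\hat F$ is nonzero there (the coefficient is bounded below by $|\Pbar|$ by the displayed positivity after \eqref{D} and by the strict positivity $\int_\Pbar(\kdotgrad\psi_1+1)>0$ established before \eqref{dispersion}). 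By the analytic implicit function theorem, $\hat F$ is analytic in a disk $|z|<r$ for some $r>0$, hence its Taylor coefficients satisfy $\hat F_m<C J^m$ with $J=1/r$; coefficientwise domination then transfers the bound to $F_m=\|\ek^m\psi_m\|_{H^1(\Pbar)}$, and the remaining three series inherit exponential bounds in the same way. Combining $F$ and $G$ gives the $H^1(\QQ)$ bound, and $W$ gives the $\ew_m$ bound.

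The main obstacle is establishing the uniformity claims, not the bare existence of a radius of convergence. The constants $K_\ek$ and $B_\ek$ entering \eqref{est2} and \eqref{est4} depend on $\ek$ polynomially ($K_\ek\le C_1\ek^2+C_2$, $B_\ek=(B_1\ek^2+B_2)^2$), and crucially $K_\ek$ blows up as $\ew_0$ approaches an interior eigenvalue $\mu'_\ell$—this is exactly the failure of solvability of \eqref{Etilde} noted after \eqref{psistar}. To get $C$ and $J$ independent of $\kk$ I would use that $\ew_0$, $\psi_*$, and the base-case fields depend on $\kk$ only through bounded quantities on the compact unit sphere $|\kk|=1$, so the coefficients of $\Phi$ are uniformly bounded in $\kk$ and $r$ can be taken uniform. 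To get uniformity over $0\le\ek\le\ek_0$ with the nonresonance restriction \eqref{nonresonance}, I would bound $K_\ek$, $B_\ek$ uniformly using $\ek\le\ek_0$ and, via the gap condition $\ew_0\in[\mu^*_m,\mu_m)\setminus\bigcup_\ell(\mu'_\ell-\varepsilon,\mu'_\ell+\varepsilon)$, bound the constant $C_2$ in terms of $\varepsilon^{-1}$ since the solution operator for \eqref{Etilde} has norm controlled by $\operatorname{dist}(\ew_0,\{\mu'_\ell\})^{-1}$; the resulting majorant coefficients are then uniform, yielding $C,J$ depending on $\ek_0$ and $\varepsilon$ but not on $\kk$ or $\ek$. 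Finally, if the band $[\mu_m^*,\mu_m)$ contains no $\mu'_\ell$, the distance to the resonances is bounded below uniformly on the whole band, the $\varepsilon$-exclusion is vacuous, and $C,J$ become independent of $\varepsilon$.
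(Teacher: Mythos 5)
Your proposal follows essentially the same route as the paper: majorize the recursive $H^1$ estimates \eqref{est1}--\eqref{est4} by the corresponding equality system, encode the majorant sequences in generating functions, invoke the analytic implicit function theorem at $z=0$ to get a positive radius of convergence and hence exponential bounds, and obtain uniformity in $\kk$ and in $\ek\le\ek_0$ from boundedness of the base data and the monotone dependence of $K_\ek$, $B_\ek$ on $\ek$ under the nonresonance condition. The only substantive difference is bookkeeping --- the paper keeps the four-variable system, shifts indices to absorb the forward reference of $s_m$ to $\bar p_{m+1}$ in \eqref{est3}, and verifies directly that the $4\times4$ Jacobian determinant equals $1$, rather than your appeal to $|\Pbar|$ and $\int_\Pbar(\kdotgrad\psi_1+1)>0$, which are used elsewhere (to solve for $\ew_{m-2}$ and for the leading-order dispersion relation, not for the nondegeneracy of the implicit function system).
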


\begin{proof}
Let us simplify our notation for the frequency and the norms of the fields,
\begin{eqnarray*}
  && \bar p_m = \|\ek^m\psi_m\|_{H^1(\Pbar)},\\
  && p_m = \|\ek^m\psi_m\|_{H^1(P)},\\
  && \tilde p_m = \|\ek^m\tilde\psi_m\|_{H^1(P)}, \\
  && s_m = |\ek^m\ew_m|.
\end{eqnarray*}
The bounds (\ref{est1},\ref{est2},\ref{est3},\ref{est4}) yield the following bounds for $m\geq1$:
\begin{eqnarray}\label{inequalities}
  \renewcommand{\arraystretch}{1.5}
\left.
  \begin{array}{l}
    \bar p_{m+1} \leq K\Big(2\ek\bar p_m+\ek^2\bar p_{m-1} + p_{m-1}+2\ek p_{m-2}+\ek^2p_{m-3}
        + \textstyle\sum\limits_{\ell=0}^{m-1}s_\ell(\bar p_{m-1-\ell} + p_{m-1-\ell})\Big),\\
    \tilde p_m \leq K_\ek\Big( \bar p_m + 2\ek p_{m-1}+\ek^2p_{m-2} + \textstyle\sum\limits_{\ell=1}^{m-1}s_\ell\, p_{m-\ell} \Big),\\
    s_m \leq K\Big( \ew_0\,\tilde p_m + \ek\bar p_{m+1} + \ek^2\bar p_m + \ek p_{m-1} + \ek^2p_{m-2}
        +\textstyle\sum\limits_{\ell=1}^{m-1}s_\ell(\bar p_{m-\ell}+p_{m-\ell}) \Big),\\
     p_m \leq \tilde p_m + B_\ek s_m.
  \end{array}
\right.
\end{eqnarray}
To prove that the numbers $\bar p_m$, $\tilde p_m$, $s_m$, and $p_m$ are exponentially bounded, it suffices to prove that the sequences of positive numbers, defined by the system of recursion relations below, obtained by replacing inequality with equality in the relations above, are exponentially bounded.  Indeed, one observes that
$\bar p_m\leq\hat a_m$, $\tilde p_m\leq\hat b_m$, $s_m\leq\hat c_m$, and $p_m\leq\hat d_m$.  Recall that we have defined $\tilde\psi_0=0$, which gives $\tilde p_0=0$.
\begin{equation}\label{recursion1}
  \renewcommand{\arraystretch}{1.5}
\left.
  \begin{array}{ll}
  \hat a_m=\hat b_m=\hat c_m=\hat d_m=0 & (m<0),\\
   \hat a_0=\bar p_0,\; \hat a_1=\bar p_1,\; \hat b_0=\tilde p_0=0,\; \hat c_0=|\ew_0|,\; \hat d_0=p_0,& \\
    \hat a_{m+1} = K\Big(2\ek\hat a_m+\ek^2\hat a_{m-1} + \hat d_{m-1}+2\ek\hat d_{m-2}+\ek^2\hat d_{m-3}
        + \textstyle\sum\limits_{\ell=0}^{m-1}\hat c_\ell\big(\hat a_{m-1-\ell} + \hat d_{m-1-\ell}\big) \Big) & (m\geq1),\\
    \hat b_m = K_\ek\Big( \hat a_m + 2\ek\hat d_{m-1}+\ek^2\hat d_{m-2} + \textstyle\sum\limits_{\ell=1}^{m-1}\hat c_\ell \hat d_{m-\ell} \Big) & (m\geq1),\\
    \hat c_m = K\Big( \ew_0\,\hat b_m + \ek\hat a_{m+1} + \ek^2\hat a_m + \ek\hat d_{m-1} + \ek^2\hat d_{m-2}
        +\textstyle\sum\limits_{\ell=1}^{m-1}\hat c_\ell\big(\hat a_{m-\ell}+\hat d_{m-\ell} \big) \Big)& (m\geq1),\\
     \hat d_m = \hat b_m + B_\ek\hat c_m & (m\geq1).
  \end{array}
\right.
\end{equation}

In addition, it is proved in the appendix that $B_\ek = (B_1\ek^2+B_2)^2$ (equation \ref{psistarbound}) and that, under the condition \eqref{nonresonance}, $K_\ek\leq C_1\ek^2+C_2$ (equation \ref{Ktaubound}).
  This means that, if $\ek_1<\ek_2$, the values of $\{\bar p_m,\tilde p_m,s_m,p_m\}$ for $\ek=\ek_1$ are bounded above by the values of $\{\hat a_m,\hat b_m,\hat c_m,\hat d_m\}$ for $\ek=\ek_2$ (observe that the initial conditions are independent of $\ek$).  To prove the theorem, it therefore suffices to prove the sequences $\{\hat a_m,\hat b_m,\hat c_m,\hat d_m\}$ are exponentially bounded for each $\ek>0$.  The bound can be taken to be independent of $\kk$ by choosing $\hat a_1\geq\bar p_1$ for all $\kk$.

It is convenient to transform the recursion relations above by the change of variables
\begin{equation*}
  a_m = \hat a_m, \;
  b_m = \hat b_{m-1}, \;
  c_m = \hat c_{m-1}, \;
  d_m = \hat d_{m-1}.
\end{equation*}
The system for $\{\hat a_m,\hat b_m,\hat c_m,\hat d_m\}$ is equivalent to the following system for $\{a_m,b_m,c_m,d_m\}$:
\begin{equation}\label{recursion2}
  \renewcommand{\arraystretch}{1.5}
\left.
  \begin{array}{ll}
   a_m= b_{m+1}= c_{m+1}= d_{m+1}=0 & (m<0),\\
    a_0=\bar p_0,\;  a_1=\bar p_1,\;  b_1=\tilde p_0=0,\;  c_1=|\ew_0|,\;  d_1=p_0,& \\
    a_{m} = K\Big( 2\ek a_{m-1}+ \ek^2a_{m-2} +  d_{m-1}+ 2\ek d_{m-2}+ \ek^2d_{m-3}
        + \textstyle\sum\limits_{\ell=1}^{m-1} c_\ell\big( a_{m-1-\ell} +  d_{m-\ell}\big)\Big) & (m\geq2),\\
    b_m = K_\ek\Big( a_{m-1} + 2\ek d_{m-1}+\ek^2d_{m-2} + \textstyle\sum\limits_{\ell=2}^{m-1} c_\ell  d_{m+1-\ell} \big)\Big) & (m\geq2),\\
    c_m = K\Big( \ew_0\,b_m + \ek a_{m} + \ek^2a_{m-1} + \ek d_{m-1} + \ek^2d_{m-2}
        +\textstyle\sum\limits_{\ell=2}^{m-1} c_\ell\big( a_{m-\ell}+ d_{m+1-\ell} \big)\Big) & (m\geq2),\\
    d_m = b_m + B_\ek c_m & (m\geq2).
  \end{array}
\right.
\end{equation}
We shall prove that $\{a_m,b_m,c_m,d_m\}$ are exponentially bounded by proving that their generating functions have a nonzero radius of convergence.  Set
\begin{eqnarray*}
  && \textstyle\sum\limits_{n=0}^\infty a_nz^n = z\alpha(z) + a_0, \\
  && \textstyle\sum\limits_{n=0}^\infty b_nz^n = z\beta(z), \\
  && \textstyle\sum\limits_{n=0}^\infty c_nz^n = z\gamma(z), \\
  && \textstyle\sum\limits_{n=0}^\infty d_nz^n = z\delta(z),
\end{eqnarray*}
The system \eqref{recursion2} is equivalent to the following system among the formal power series $\alpha(z)$, $\beta(z)$, $\gamma(z)$, and $\delta(z)$
\begin{equation}\label{generating}
  \renewcommand{\arraystretch}{1.5}
\left.
  \begin{array}{l}
    z(\alpha(z)-a_1) = K\Big( 2\ek z^2\alpha(z) + \ek^2z^2(z\alpha(z)+a_0) + (z+2\ek z^2+\ek^2z^3)z\delta(z) \,+ \hspace{3em}\\
        \hfill +\, z(z\gamma(z))(z\alpha(z)+a_0) + (z\gamma(z))(z\delta(z))\Big),\\
    z(\beta(z)-b_1) = K_\ek\Big( z^2\alpha(z) + (2\ek z+\ek^2 z^2)z\delta(z) + z^{-1}(z(\gamma(z)-c_1))(z(\delta(z)-d_1)) \Big),\\
    z(\gamma(z)-c_1) = K\Big( \ek z(\alpha(z)-a_1) + \ew_0\,z(\beta(z)-a_1) +\ek^2z^2\alpha(z) + (\ek z+\ek^2z^2)z\delta(z) \,+ \hspace{3em}\\
        \hfill +\, (z(\gamma(z)-c_1))z\alpha(z) + z^{-1}(z(\gamma(z)-c_1))(z(\delta(z)-d_1)) \Big),\\
    z(\delta(z)-d_1) = z(\beta(z)-b_1) + B_\ek z(\gamma(z)-c_1).
  \end{array}
\right.
\end{equation}
Define the following functions of five variables:
\begin{equation*}
  \renewcommand{\arraystretch}{1.5}
\left.
  \begin{array}{l}
    A(\alpha,\beta,\gamma,\delta,z) = -(\alpha-a_1) + K\Big( 2\ek z\alpha + \ek^2z(z\alpha+a_0) + (z+2\ek z^2+\ek^2z^3)\delta \,+\\\hfill+\, z\gamma(z\alpha+a_0) + z\gamma\delta\Big),\\
    B(\alpha,\beta,\gamma,\delta,z) = -(\beta-b_1) + K_\ek\Big( z\alpha + (2\ek z+\ek^2 z^2)\delta + (\gamma-c_1)(\delta-d_1) \Big),\\
    C(\alpha,\beta,\gamma,\delta,z) = -(\gamma-c_1) + K\Big( \ek(\alpha-a_1) + \ew_0(\beta-a_1) +\ek^2z\alpha + (\ek z+\ek^2z^2)\delta \,+\hspace{2em}\\\hfill+\, z(\gamma-c_1)\alpha + (\gamma-c_1)(\delta-d_1) \Big),\\
    D(\alpha,\beta,\gamma,\delta,z) = -(\delta-d_1) + (\beta-b_1) + B_\ek (\gamma-c_1).
  \end{array}
\right.
\end{equation*}
One can check that $A$, $B$, $C$, and $D$ all vanish at $(\alpha,\beta,\gamma,\delta,z) = (a_1,b_1,c_1,d_1,0)$.  Moreover, the determinant of the Jacobian matrix with respect to the first four variables is nonzero at this point:
\begin{equation*}
  \left.\frac{\partial(A,B,C,D)}{\partial(\alpha,\beta,\gamma,\delta)}\right|_{z=0} = 
  \renewcommand{\arraystretch}{1.5}
\left[
  \begin{array}{cccc}
    -1 & 0 & 0 & 0 \\
    0 & -1 & K_\ek(\delta-d_1) & K_\ek(\gamma-c_1) \\
    K\ek & K\ew_0 & -1+K(\delta-d_1) & K(\gamma-c_1) \\
    0 & 1 & B_\ek & -1
  \end{array}
\right],
\end{equation*}
and thus we obtain
\begin{equation}\label{determinant}
  \det \frac{\partial(A,B,C,D)}{\partial(\alpha,\beta,\gamma,\delta)}(a_1,b_1,c_1,d_1,0) = 1 \not=0.
\end{equation}
By the implicit function theorem of analytic functions of several variables, we infer that there exist analytic functions represented by power series
\begin{eqnarray*}
  && \alpha(z) = a_1 + a_2z + a_3z^2 + \dots, \\
  && \beta(z) = b_1 + b_2z + b_3z^2 + \dots, \\
  && \gamma(z) = c_1 + c_2z + c_3z^2 + \dots, \\
  && \delta(z) = d_1 + d_2z + d_3z^2 + \dots,
\end{eqnarray*}
that are convergent in a neighborhood of $z=0$, say for $|z|<R>0$, and such that
$$(A,B,C,D)(\alpha(z),\beta(z),\gamma(z),\delta(z),z)=(0,0,0,0)$$ for $|z|<R$.  By definition of $A$, $B$, $C$, and $D$, the functions $\alpha$, $\beta$, $\gamma$, and $\delta$ satisfy \eqref{generating}, and their coefficients therefore satisfy \eqref{recursion2}.  By the convergence of the power series, the coefficients and therefore also $\hat a_m$, $\hat b_m$, $\hat c_m$, and $\hat d_m$ are exponentially bounded.
\end{proof}

\section{Power series solution of the cell problem}\label{sec:solution}

We are now ready to prove that the formal power series in fact converge to Bloch wave solutions and their dispersion relations for small but nonzero $\eta$, as long as $\ew_0(\kk,\ek)$ is not a Dirichlet eigenvalue of $\Delta$ in $P$.  In the process, we also show that the odd coefficients $\ew_m$ vanish.

\subsection{Solutions for nonzero $\eta$}

The exponential bound for the functions $\ek^m\psi_m$ and the numbers $\ek^m\ew_m$ and therefore also for the functions $\psi_m$ and numbers $\ew_m$, implies that the formal power series for the field $u$ and the square frequency $\ew$ converge for
$|\eta|<R$ for some $R>0$.  We show that the functions defined by these power series are indeed solutions of the cell problem, for each $\ek\geq0$.  For $|\eta|<R$, define the functions
\begin{eqnarray*}
  && u^\eta = u_0 + \eta u_1 + \eta^2 u_2 + \dots,\\
  && \ew^\eta = \ew_0 + \ew_1\eta + \ew_2\eta^2 + \dots,
\end{eqnarray*}
in $\QQ$, in which $u_m = i^m\bar u_0\psi_m$, and $\psi_m$ are the solutions to the cell problems already described.
For $v\in\Honeper(\QQ)$ and $\eta<R$, define
\begin{multline}
  a^\eta(v) := \ek^2\int_\Pbar \left( \nabla u^\eta\cdot\nabla\bar v + i\eta\kk\cdot(u^\eta\nabla\bar v-\bar v\nabla u^\eta) + \eta^2u^\eta\bar v \right)
  -\eta^2\ew^\eta\int_\Pbar u^\eta\bar v \;+\\
  + \eta^2 \int_P \left( \nabla u^\eta\cdot\nabla\bar v + i\eta\kk\cdot(u^\eta\nabla\bar v-\bar v\nabla u^\eta) + \eta^2u^\eta\bar v \right)
  -\eta^2\ew\int_P u^\eta\bar v.
\end{multline}
This function of $\eta$ has a convergent power series in $\eta$ that is obtained by inserting the series for $u^\eta$ and $\ew^\eta$ into the expression for $a^\eta(v)$ and expanding in powers of $\eta$.  The coefficients of this expansion are exactly the right-hand side of equation \eqref{Ebar}, multiplied by $i^m$, with the identification $u_m = i^m\bar u_0\psi_m$.  But this is equal to zero for all $m$ because the functions $\psi_m$ satisfy \eqref{Ebar}.  Thus all coefficients of the power-series expansion of $a^\eta(v)$ vanish.  We conclude that $a^\eta(v)=0$ for all $v\in\Honeper(\QQ)$ and $|\eta|<R$, which means that $(u^\eta,\ew^\eta)$ satisfies
\begin{equation*}
  a^\eta(v) = 0 \qquad \forall v\in\Honeper(\QQ),
\end{equation*}
which is the weak formulation \eqref{masterweak} of the PDE for $u$, for all $|\eta|<R$.

\begin{theorem}
In the solution $(u^\eta,\ew^\eta)$, all of the functions $\psi_m$, where $u_m = i^m\bar u_0\psi_m$, are real-valued, $\ew_m\in\RR$ for all $m$, and $\ew_m=0$ if $m$ is odd.
\end{theorem}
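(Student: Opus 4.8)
The plan is to prove all three assertions at once by combining two independent ingredients: a conjugation symmetry of the cell problem under $\eta\mapsto-\eta$, which fixes the parity of the coefficients, and a self-adjointness (energy) identity, which forces the frequency to be real. Throughout I fix the free normalization constant $\bar u_0$ to be \emph{real}, so that $u_m=i^m\bar u_0\psi_m$ and the assertion ``$\psi_m$ real'' is unambiguous.

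First I would exploit the factorization of the bulk operator in \eqref{master}, namely $(\nabla+i\eta\kk)^2=\Delta+2i\eta\kdotgrad-\eta^2$, together with the interface operator $(\nabla+i\eta\kk)\cdot\nn$; both have real coefficients apart from the factor $i\eta\kk$, while $\ek$ and $\ew_0$ are real. Complex conjugation sends $(\nabla+i\eta\kk)$ to $(\nabla-i\eta\kk)$ and is therefore equivalent to the substitution $\eta\mapsto-\eta$: if $(u,\ew)$ solves the weak problem \eqref{masterweak} at parameter $\eta$, then $(\bar u,\bar\ew)$ solves it at parameter $-\eta$ (this is checked by conjugating $a^\eta(v)=0$ and replacing the test function $v$ by $\bar v$). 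Applying this to the constructed solution evaluated at $-\eta$ shows that $(\overline{u^{-\eta}},\overline{\ew^{-\eta}})$ solves \eqref{masterweak} at parameter $\eta$. This conjugate object is again a convergent power series, with coefficients $(-1)^m\bar u_m$ and $(-1)^m\bar\ew_m$; its matrix-constant leading term is $\overline{\bar u_0}=\bar u_0$, its leading frequency is $\overline{\ew_0}=\ew_0$ (so it lies on the same branch), and, because $\int_\Pbar u_m=0$ for $m\geq1$, its coefficients also have zero average over $\Pbar$. By the uniqueness in the existence theorem of Section~\ref{sec:sequence}, the two series coincide, giving $u_m=(-1)^m\bar u_m$ and $\ew_m=(-1)^m\bar\ew_m$ for all $m$. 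Since $u_m=i^m\bar u_0\psi_m$ with $\bar u_0\in\RR$, the first identity collapses to $\psi_m=\bar\psi_m$, so every $\psi_m$ is real.

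Next I would establish reality of the frequency by testing the weak form \eqref{masterweak} with $v=u^\eta$. The gradient terms $|\nabla u^\eta|^2$ and the zeroth-order terms $\eta^2|u^\eta|^2$ are manifestly real, and the first-order terms are real as well, because $i\eta\kk\cdot(u^\eta\nabla\overline{u^\eta}-\overline{u^\eta}\nabla u^\eta)=-2\eta\,\kk\cdot\Im(\overline{u^\eta}\nabla u^\eta)$. Hence the entire left-hand side is real except for the contribution $-\eta^2\ew^\eta\int_\QQ|u^\eta|^2$; since the whole expression vanishes and $\eta^2\int_\QQ|u^\eta|^2>0$ for real $\eta\neq0$ small (where $u^\eta$ is close to the nonzero constant $\bar u_0$ in $\Pbar$), we conclude $\ew^\eta\in\RR$. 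As $\ew^\eta$ is analytic in $\eta$ and real on a real interval around $0$, all of its Taylor coefficients $\ew_m$ are real.

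Finally I would combine the two conclusions: the symmetry gives $\ew_m=(-1)^m\bar\ew_m$, and the energy identity gives $\ew_m\in\RR$. For odd $m$ these yield $\ew_m=-\bar\ew_m=-\ew_m$, hence $\ew_m=0$, completing all three assertions. I expect the main obstacle to be the careful justification that $(\overline{u^{-\eta}},\overline{\ew^{-\eta}})$ satisfies \emph{every} hypothesis of the uniqueness theorem, so that uniqueness may legitimately be invoked: one must verify that the base data $\psi_0,\psi_1,\psi_2$ in $\Pbar$, $\psi_0$ in $P$, and $\ew_0$ are themselves conjugation-invariant (each solves a boundary-value problem---\eqref{psi0strong}, \eqref{psi1strong}, \eqref{psi2weak}---with real coefficients and real data), that the conjugated series still satisfies the recursion \eqref{Ebar}, \eqref{D}, and that it respects the decomposition in $P$ of condition (4). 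A more computational route that bypasses this point is a direct induction proving simultaneously that each $\psi_m$ is real and that $(-i)^m\ew_m$ is real: under this hypothesis every coefficient combination $(-i)^\ell\ew_\ell$ appearing in \eqref{psimstrong}, \eqref{Etilde}, and \eqref{D} is real, so the Dirichlet problem for $\psi_m$ in $P$ and the Neumann problem in $\Pbar$ carry real data (recall $\psi_*$ is real by \eqref{Estar}), and \eqref{D} forces $(-i)^m\ew_m\in\RR$; combined with the energy reality $\ew_m\in\RR$, the factor $(-i)^m$ kills the odd-index terms exactly as above.
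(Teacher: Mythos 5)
Your proposal is correct. The second ingredient---testing \eqref{masterweak} with $v=u^\eta$ to see that everything except the $\ew^\eta\int_\QQ|u^\eta|^2$ term is real, hence $\ew^\eta\in\RR$ and all $\ew_m\in\RR$---is exactly the paper's first step. Where you diverge is in how you get reality of the $\psi_m$ and the vanishing of the odd $\ew_m$. The paper runs a single interleaved induction on the cell problems: assuming $\psi_\ell$ real and odd $\ew_\ell$ zero up to a given order, the data of \eqref{Etilde} are real (this is precisely where the vanishing of the odd $\ew_\ell$ is needed, since $(-i)^\ell\ew_\ell$ must be real), so $\tilde\psi_{n-1}$ is real; then \eqref{D} forces $(-i)^{n-1}\ew_{n-1}\in\RR$, which together with $\ew_{n-1}\in\RR$ kills the odd terms and makes $\psi_{n-1}=\tilde\psi_{n-1}+(-i)^{n-1}\ew_{n-1}\psi_*$ real. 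You instead exploit the conjugation symmetry $\eta\mapsto-\eta$ of \eqref{masterweak} and invoke the uniqueness clause of the existence theorem of Section~\ref{sec:sequence} to conclude $\psi_m=\bar\psi_m$ and $\ew_m=(-1)^m\bar\ew_m$ in one stroke, and only then intersect with $\ew_m\in\RR$. This is a genuinely different and arguably cleaner packaging: it decouples the parity statement from the term-by-term reality and offloads the induction onto the already-proved uniqueness theorem. The cost, which you correctly identify as the main obstacle, is that uniqueness lives at the level of the coefficient recursion, not of the PDE at fixed $\eta$ (where several branches coexist), so you must verify that the conjugated, sign-flipped sequence satisfies all four hypotheses---real base data, the recursions \eqref{Ebar} and \eqref{D} (using $\overline{(-i)^\ell\ew_\ell}=(-i)^\ell(-1)^\ell\bar\ew_\ell$), the zero-average normalization, and the decomposition through the real field $\psi_*$. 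Those verifications are essentially the same computations the paper's induction performs, just reorganized; indeed, the ``more computational route'' you sketch at the end is the paper's proof.
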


\begin{proof}
From the weak form of the cell problem \eqref{masterweak}, one shows that $\ew$ is necessarily real-valued by putting $v=u$, and therefore all coefficients $\ew_m$ are real.
One then proceeds by induction on the following set of statements for $n\geq2$:
\begin{enumerate}
  \item $\psi_m$ in $P$ is real-valued if $m\leq n-2$,
  \item $\psi_m$ in $\Pbar$ is real-valued if $m\leq n$,
  \item $\ew_m=0$ if $m\leq n-2$ and $m$ is odd.
\end{enumerate}
These statements hold for $n=2$, as one observes from equations (\ref{psi0},\ref{psi1strong},\ref{dispersion},\ref{psi2weak}) that $\psi_0$ in $\QQ$, $\psi_1$ and $\psi_2$ in $\Pbar$, and $\ew_0$ are all real-valued.  Let $n\geq2$ be arbitrary, and let us prove these statements for $n$ replaced by $n+1$.  By equation \eqref{Etilde}, with $m=n-1$, we see that $\tilde\psi_{n-1}$ is real-valued in $P$.  Equation \eqref{D} with $m=n+1$ shows that $(-i)^{n-1}\ew_{n-1}$ is real ($\psi_*$ is real-valued).  Since $\ew_{n-1}$ is real, $\ew_{n-1}=0$ if $n-1$ is odd.  Thus
$\psi_{n-1} = \tilde\psi_{n-1} + (-i)^{n-1}\ew_{n-1}\psi_*$ is real-valued.  Finally, equation \eqref{Ebar} for $m=n+1$ shows that $\psi_{n+1}$ is real-valued in $\Pbar$.
\end{proof}

\subsection{Convergence of as $\eta\to0$}

The solution of the cell problem yields a Bloch wave solution for our periodic structure,
\begin{equation*}
  U^\eta(x) = e^{ik\kk\cdot x} u(kx/\eta)
  = e^{ik\kk\cdot x} \Big( u_0(kx/\eta) + \eta\sum_{m=0}^\infty \eta^m u_{m+1}(kx/\eta) \Big),
\end{equation*}
which tends, in the sense of strong two-scale convergence, to a separable function that is a plane wave in $x$:
\begin{equation*}
  U^\eta(x) \stsc U_0(x,y) = e^{ik\kk\cdot x} u_0(y).
\end{equation*}
In the two-scale theory \cite{Zhikov2000},\cite[\S3]{Zhikov2004}, one writes the solution as a function of $x$ {plus} a function of $(x,y)$ that is supported in the soft phase $P$,
\begin{equation*}
  U_0(x,y) = u_0(x) + v_0(x,y) = e^{ik\kk\cdot x} + e^{ik\kk\cdot x} \ew_0\sum_{n=0}^\infty\frac{\av{\phi_n}{P}}{\mu_n-\ew_0}\phi_n(y),
\end{equation*}
in which the sum over $n$ is extended by zero into $\Pbar$ and we keep in mind that $\ew_0$ depends on $k$ and $\kk$.
The function $u_0(x)$ is a Bloch solution in a homogenized medium.  This homogeneous medium is realized approximately by a real periodic structure with small cell width $d$, in which the fields are measured in the matrix only, and the role of the inclusions is to lend the bulk medium its rich band-gap structure.  This idea is discussed in several other works such as \cite{BouchitteFelbacq2004,BouchitteSchweizer2008,FortesLiptonShipman2010,Smyshlyaev2009,KohnShipman2008,PendryHoldenRobbins1999}, in which the extreme properties of periodically dispersed inclusions hosted by a neutral matrix material produce meta-materials with interesting bulk properties.

\section{Negative material coefficient}\label{sec:negative}

If we take the material coefficient inside the inclusion $P$ to be large and negative, while retaining a small positive index in the matrix,
\begin{equation*}
  \apbar = 1, \qquad \ap = -\frac{d^2}{\epsr} = -\frac{\eta^2}{\epsr k^2},
\end{equation*}
one can follow the minus-sign through all of the foregoing calculations, obtaining a convergent power series for the field $u$ and frequency $\omega$, with analogs of all of the theorems we have proved.  The situation is actually simpler because now the equation in $P$ is always solvable and the nonresonance condition \eqref{nonresonance} is no longer needed.  We show in the appendix (see equations \ref{Cneg}, \ref{Bneg}, \ref{Kneg}) that, in place of the constants $K$, $K_\ek$, and $B_\ek$ in section \ref{subsec:convergence}, we may use a single constant $K$ that is independent of $\kk$ and $\ek$.  We therefore obtain a radius of convergence that is valid for all wavevectors.

In place of equations \eqref{psi0strong} and \eqref{psi0} for the field inside $P$, we obtain the Helmholtz equation with a change in sign in front of the frequency $\ew_0$,
\begin{equation}\label{psi0strongN}
  \renewcommand{\arraystretch}{1.4}
\left.
  \begin{array}{rl}
    \Delta \psi_0 - \ew_0\psi_0 = 0 & \text{ in } P, \\
    \psi_0\at{P} = 1 & \text{ on } \partial P,
  \end{array}
\right.
\end{equation}
and the solution is
\begin{equation}\label{psi0N}
  \psi_0(\yy) = \sum_{n=1}^\infty \frac{\mu_n\,\av{\phi_n}{P}}{\mu_n+\ew_0}\,\phi_n(\yy),
  \quad \yy\in P.
\end{equation}
The expression \eqref{solvability2} for the solvability for $\psi_2$ in $\Pbar$ remains unaltered, and, using the explicit expression for $\psi_0$, yields
\begin{equation}\label{dispersionN}
  \ek^2\!\int_\Pbar(\kdotgrad\psi_1+1) = \ew_0\sum_{n=1}^\infty \frac{\mu_n\av{\phi_n}{P}^2}{\mu_n+\ew_0}.
\end{equation}
Since the field $\psi_1$ in $\Pbar$ is decoupled from the inclusion, it does not depend on $\apbar$, and therefore this dispersion relation differs from that for positive $\apbar$ only by the plus-sign in front of $\ew_0$ in the denominator.  It possesses therefore only one branch, which passes through the origin in $(\ek,\ew_0)$-space, or equivalently, in $(k,\omega)$-space, as illustrated in Fig.~\ref{fig:DispersionNeg}.

The weak form \eqref{Ebar} of the sequence of PDEs in the unit cell becomes, for negative index,
\begin{multline}\label{EbarN}
  \ek^2\!\int_{\Pbar} \nabla\psi_m\cdot\nabla\bar v
  \,+\, \ek^2\!\int_{\Pbar}\left[ \kk\psi_{m-1}\cdot\nabla\bar v -\left(\kk\cdot\nabla\psi_{m-1}+\psi_{m-2}\right)\bar v \right] + \int_\QQ \sum_{\ell=0}^{m-2} (-i)^\ell\ew_\ell\,\psi_{m-2-\ell}\,\bar v \,+ \\
  +\int_P\nabla\psi_{m-2}\cdot\nabla v
  + \int_P \left[ \kk\psi_{m-3}\cdot\nabla\bar v - \left(\kk\cdot\nabla\psi_{m-3}+\psi_{m-4}\right)\bar v\right] \,=\, 0 \qquad
  \forall\; v\in \Honeper(\QQ),
\end{multline}
and the equations for $\psi$, $\tilde\psi$, and $\psi_*$ in the decomposition
$\psi_m = \tilde\psi_m + (-i)^m\ew_m\psi_*$ in $P$ become
\begin{equation}\label{EN}
  \renewcommand{\arraystretch}{1.5}
\left\{
  \begin{array}{l}
    \displaystyle \int_P\left( \nabla\psi_m\cdot\nabla\bar v + \ew_0\,\psi_m\bar v \right)
    + \int_P\left[ \kk\psi_{m-1}\cdot\nabla\bar v -\left( \kk\cdot\nabla\psi_{m-1}+\psi_{m-2} \right)\bar v \right] \,+\hspace{2em}\\
    \displaystyle \hfill +\int_P \sum_{\ell=1}^{m} (-i)^\ell \ew_\ell\,\psi_{m-\ell}
    \,\bar v \,=\, 0
    \qquad \forall\; v\in H^1_0(P),\\
    \psi_m|_{\dP^-} = \psi_m|_{\dP^+},
  \end{array}
\right.
\end{equation}
\begin{equation}\label{EtildeN}
  \renewcommand{\arraystretch}{1.5}
\left\{
  \begin{array}{l}
    \displaystyle \int_P\left( \nabla\tilde\psi_m\cdot\nabla\bar v + \ew_0\,\tilde\psi_m\bar v \right)
    +\int_P\left[ \kk\psi_{m-1}\cdot\nabla\bar v -\left( \kk\cdot\nabla\psi_{m-1}+\psi_{m-2} \right)\bar v \right] \,+\hspace{3em}\\
    \displaystyle \hfill +\int_P \sum_{\ell=1}^{m-1} (-i)^\ell \ew_\ell\,\psi_{m-\ell}
    \,\bar v \,=\, 0
    \qquad \forall\; v\in H^1_0(P),\\
    \tilde\psi_m|_{\dP^-} = \psi_m|_{\dP^+},
  \end{array}
\right.
\end{equation}
and
\begin{equation}\label{EstarN}
  \renewcommand{\arraystretch}{1.5}
\left\{
  \begin{array}{l}
  \displaystyle \int_P\left( \nabla\psi_*\cdot\nabla\bar v + \ew_0\,\psi_*\bar v \right)
  + \int_P\psi_0\bar v \,=\, 0 \qquad \forall\; v\in H^1_0(P),\\
   \psi_*|_{\dP} = 0.
  \end{array}
\right.
\end{equation}
The strong form of the latter equation is
\begin{equation*}
\renewcommand{\arraystretch}{1.3}
\left.
  \begin{array}{ll}
      (-\Delta + \ew_0)\psi_* = -\psi_0 & \text{ in } \; P, \\
      \psi_* = 0 & \text{ on } \; \dP.
  \end{array}
\right.
\end{equation*}
and the solution in terms of the Dirichlet eigenfunctions of $P$ is
\begin{equation}\label{psistarN}
  \psi_* = -\sum_{n=1}^\infty \frac{\mu_n\,\av{\phi_n}{P}}{(\mu_n+\ew_0)^2}\,\phi_n\,
  \qquad \text{in $P$.}
\end{equation}

The solvability condition for $\psi_m$ in $\Pbar$, which determines the value of $\ew_{m-2}$, is
\begin{multline}\label{DE}
  (-i)^{m-2}\ew_{m-2}\left( \int_\QQ\psi_0 \,+\, \ew_0\!\int_P\psi_* \right)
  + \int_\QQ \sum_{\ell=1}^{m-3} (-i)^\ell\ew_\ell\,\psi_{m-2-\ell}
   \;\;+ \\
  +\, \ew_0\int_P\tilde\psi_{m-2} \,-\, \ek^2\int_{\Pbar}\left( \kk\cdot\nabla\psi_{m-1}+\psi_{m-2} \right)
  - \int_P\left( \kk\cdot\nabla\psi_{m-3} + \psi_{m-4} \right) \,=\, 0.
\end{multline}
Again, the quantity in multiplying $\ew_{m-2}$ is always nonzero,
\begin{equation*}
  \int_\QQ\psi_0 \,+\, \ew_0\!\int_P\psi_* \,=\, |\Pbar| + \sum_{n=1}^\infty \frac{\mu_n^2\av{\phi_n}{P}^2}{(\mu_n+\ew_0)^2} \,>\, 0,
\end{equation*}
so that one can always solve for $\ew_{m-2}$.

\section{Appendix}\label{sec:appendix} 

In this appendix, we derive the bounds on the fields in $P$ and $\Pbar$ that are employed in the recursive system of inequalities in section \ref{subsec:convergence}.

\subsection{Bound for the resolvent of the Dirichlet-$\Delta$ in $P$}

Let $\{\varphi_j,\,\nu_j\}_{j=1}^\infty$, $0<\nu_1\leq\nu_2\leq\nu_3\leq\dots$, be the Dirichlet spectral data for the region $P$, where the $\varphi_j$ form an orthonormal Hilbert-space basis for $L^2(P)$, that is,
\begin{equation*}
  \renewcommand{\arraystretch}{1.5}
\left\{
  \begin{array}{ll}
    \displaystyle
    -\Delta \varphi_j=\nu_j \varphi_j & \mbox{in } P, \\
    \varphi_j=0 & \mbox{on }\partial P,\\
    \|\varphi_j\|_{L^2(P)} = 1. &
  \end{array}
\right.
\end{equation*}
The set $\{\nu_j\}$ is the disjoint union of the sets $\{\mu_n\}$ and $\{\mu'_\ell\}$, where the former consists of those eigenvalues for which there is an eigenfunction with nonzero mean. 
Using integration by parts, one finds that $\nabla\phi_j$ is orthogonal to $\nabla\phi_{j'}$ in the mean square inner product if $j\not=j'$ and that
\begin{equation}\label{innerproduct}
\Vert\nabla\varphi_j\Vert^2_{L^2(P)}=\nu_j\Vert\varphi_j\Vert^2_{L^2(P)} = \nu_j.
\end{equation}
Now let $u$ be the solution of 
\begin{equation}\label{problemP}
  \renewcommand{\arraystretch}{1.5}
\left\{
  \begin{array}{ll}
    \displaystyle
    \Delta u + \nu u = G & \mbox{in } P, \\
    u=0 & \mbox{on }\partial P,
  \end{array}
\right.
\end{equation}
in which $\nu$ is not an eigenvalue and $G\in L^2(P)$.  We shall obtain a bound on the solution $u$ in $H^1(P)$ in terms of $G$.

If the eigenfunction expansion of $G$ is
$$
G=\sum_{j=0}^{\infty}b_j\varphi_j,
$$
then
$$
u=\sum_{j=0}^{\infty}\frac{b_j}{\nu-\nu_j}\varphi_j
\;\;\mbox{ and }\;\;
\nabla u=\sum_{j=0}^{\infty}\frac{b_j}{\nu-\nu_j}\nabla\varphi_j,
$$
so that using \eqref{innerproduct} we obtain
$$
\Vert u\Vert^2_{L^2(P)}=\sum_{j=0}^{\infty}\left|\frac{b_j}{\nu-\nu_j}\right|^2
\;\;\mbox{ and }\;\;
 \Vert \nabla u\Vert^2_{L^2(P)}=\sum_{j=0}^{\infty}\nu_j\left|\frac{b_j}{\nu-\nu_j}\right|^2,
$$
Hence,
\begin{eqnarray*}
\Vert u\Vert^2_{L^2(P)}+\Vert \nabla u\Vert^2_{L^2(P)}&=&\sum_{j=0}^{\infty}(1+\nu_j)\left|\frac{b_j}{\nu-\nu_j}\right|^2\\ 
&\leq&\max\limits_{j}\left\{\frac{1+\nu_j}{|\nu-\nu_j|^2}\right\}\sum_{j=0}^{\infty}|b_j|^2.\\
\end{eqnarray*}
Since the latter sum is equal to $\Vert G\Vert^2_{L^2(P)}$, we obtain
\begin{equation*}
  \|u\|_{H^1(P)} \leq C_\nu \|G\|_{L^2(P)},
\end{equation*}
in which
\begin{equation}\label{Cmu}
  C_\nu=\max\limits_{j}\frac{(1+\nu_j)^\half}{|\nu-\nu_j|}.
\end{equation}
%

Now let us bound the constants $C_{\pm\ew_0}$ in terms of $\ek$.

In the case of negative material coefficient, the pair $(\ek,\ew_0)$ lies on the homogenized dispersion relation \eqref{dispersionN}, and we must put $\nu=-\ew_0$ in \eqref{problemP}.  Putting $\nu=-\ew_0$ in \eqref{Cmu} yields a uniform bound on $C_{-\ew_0}$:
\begin{equation}\label{Cneg}
  C_{-\ew_0} \leq C_{0} \quad \text{for }\; \ew_0\geq0.  
\end{equation}

In the case of positive material coefficient, $(\ek,\ew_0)$ lies on the homogenized dispersion relation \eqref{dispersion},
\begin{equation}\label{dispersion2}
  \ek^2 = B\ew_0\sum_{n=1}^\infty \frac{\mu_n\av{\phi_n}{P}^2}{\mu_n-\ew_0},
\end{equation}
in which $B = [\int_\Pbar(\kdotgrad\psi_1+1)]^{-1}$.  Let us choose the branch of this relation associated with the eigenvalue $\mu_m$ for some fixed $m$.  This means that $\ew_0(\ek)$ lies in the band $[\mu^*_m,\mu_m)$ for all $\ek$, where $\mu^*_m$ is the root of the right-hand-side of \eqref{dispersion2} that lies between $\mu_{m-1}$ and $\mu_m$ for $m>1$ and $\mu^*_0=0$.  The right-hand side of \eqref{dispersion2} is positive for $\mu^*_m<\ew_0<\mu_m$ and tends to infinity as $\ew_0$ tends to $\mu_m$ from below (see Fig.~\ref{fig:DispersionPos}).  Let $K$ be such that
\begin{equation*}
  \left| B\ew_0\sum_{n\not=m} \frac{\mu_n\av{\phi_n}{P}^2}{\mu_n-\ew_0} \right| < K,
\end{equation*}
so that
\begin{equation*}
  0 <
  B\ew_0 \frac{\mu_m\av{\phi_m}{P}^2}{\mu_m-\ew_0} \leq
  \ek^2 + K.
\end{equation*}
This yields the bound
\begin{equation}\label{inverse}
  \frac{1}{\mu_m-\ew_0} \leq
  \left[ B\ew_0\mu_m\av{\phi_m}{P}^2\right]^{-1}(\ek^2+K)
  \leq \left[ B\mu^*_m\mu_m\av{\phi_m}{P}^2\right]^{-1}(\ek^2+K).
\end{equation}
From \eqref{Cmu},
\begin{equation*}
  C_{\ew_0} \leq \max\left\{
  \frac{(1+\mu_m)^\half}{\mu_m-\ew_0}, M
  \right\},
\end{equation*}
in which
\begin{equation*}
    \frac{(1+\nu_j)^\half}{|\nu_j-\ew_0|} \leq M \qquad \text{if } \nu_j\not=\mu_m.
\end{equation*}

Here is precisely where the issue of secular terms in the series solution arises.  If there is an eigenvalue $\mu'_\ell$ in $[\mu^*_m,\mu_m)$, then $M\to\infty$ as $\ew_0\to\mu'_\ell$, and $M=\infty$ when $\ew_0=\mu'_\ell$.  In this case, there is no uniform bound on $C_{\ew_0}$ for bounded values of $\ek$ and, moreover, for that $\ek$ for which $\ew_0=\mu'_\ell$, the problem \eqref{problemP} is not well posed.

In the case in which there are no values $\mu'_\ell$ in $[\mu^*_m,\mu_m)$, $M$ can be taken to be a fixed number for all $\ew_0\in[\mu^*_m,\mu_m)$ and thus for all $\ek$.  More generally, if small neighborhoods of those eigenvalues $\mu'_\ell$ that are in $[\mu^*_m,\mu_m)$ are excised from this band, $M$ can be taken to be finite.
Thus, using our estimate for $C_{\ew_0}$ together with \eqref{inverse}, we obtain the bound
\begin{equation}\label{Czeta}
   C_{\ew_0} \leq A_1\ek^2 + A_2, \qquad \forall\,\ek\text{ such that } \ew_0(\ek) \in [\mu^*_m,\mu_m)\setminus\bigcup\limits_{\ell}(\mu'_\ell-\varepsilon,\mu'_\ell+\varepsilon).
\end{equation}
in which $A_1$ and $A_2$ depend on $P$ and the branch of the dispersion relation, and, in case there is an eigenvalue $\mu'_\ell$ in $[\mu^*_m,\mu_m)$, also on $\varepsilon$, and $A_2\to\infty$ as $\varepsilon\to0$.

\subsection{Bounds for the fields in $P$ and $\Pbar$}\label{subsec:appendixfields}

\noindent
{\slshape {\bfseries The field $\psi_*$}.}
In the negative case, we obtain, from the explicit expression \eqref{psistarN} for $\psi_*$,
\begin{equation}\label{Bneg}
  \|\psi_*\|^2_{H^1(P)} = \sum_{n=1}^\infty 
  \frac{(1+\mu_n)\mu_n^2\,\av{\phi_n}{P}^2}{(\mu_n+\ew_0)^4} < B,
\end{equation}
which is bounded by the single number $B$ for all $\ew_0\geq0$.

In the positive case, the expression \eqref{psistar} of $\psi_*$ yields
\begin{equation*}
  \|\psi_*\|^2_{H^1(P)} = \sum_{n=1}^\infty 
  \frac{(1+\mu_n)\mu_n^2\,\av{\phi_n}{P}^2}{(\mu_n-\ew_0)^4}.
\end{equation*}
Since all of these summands except for $n=m$ are uniformly bounded over $n$ and $\ew_0\in[\mu^*_m,\mu_m)$, we obtain
\begin{equation*}
   \|\psi_*\|_{H^1(P)} \leq K + \frac{(1+\mu_m)^\half\mu_m\,\av{\phi_m}{P}}{(\mu_m-\ew_0)^2},
\end{equation*}
in which $K$ depends only on $P$ and the branch of the dispersion relation.  In view of \eqref{inverse}, we obtain, for some constants $B_1$ and $B_2$,
\begin{equation}\label{psistarbound}
  \|\psi_*\|_{H^1(P)} \leq (B_1\ek^2 + B_2)^2 =: B_\ek.
\end{equation}

\medskip
\noindent
{\slshape {\bfseries Problem 1:} Field extension from $\Pbar$ to $\QQ$ and solution in $P$.}\,
Suppose that $\psi$ is defined as an $H^1$ function in $\Pbar$, $G\in L^2(P)$, and $\ew$ is not a Dirichlet eigenvalue of $-\Delta$ in $P$.  Consider the extension of $\psi$ to $H^1(\QQ)$ such that
\begin{equation*}
  \renewcommand{\arraystretch}{1.5}
\left\{
  \begin{array}{ll}
    \displaystyle
    \int_P(\nabla\psi\cdot\nabla\bar v-\ew\psi\bar v) = \int_P G\bar v \qquad
    \forall\, v\in H^1_0(P), \\
    \psi|_{\dP^-} = \psi|_{\dP^+}.
  \end{array}
\right.
\end{equation*}
Decompose $\psi$ as $\psi=\psi_0+\psi_1$ such that
\begin{equation*}
  \renewcommand{\arraystretch}{1.5}
\left\{
  \begin{array}{ll}
    \displaystyle
    \int_P(\nabla\psi_1\cdot\nabla\bar v+\psi_1\bar v) = 0 \qquad
    \forall\, v\in H^1_0(P), \\
    \psi_1|_{\dP^-} = \psi|_{\dP^+}.
  \end{array}
\right.
\end{equation*}
and
\begin{equation*}
  \renewcommand{\arraystretch}{1.5}
\left\{
  \begin{array}{l}
    \displaystyle
    \int_P(\nabla\psi_0\cdot\nabla\bar v-\ew\psi_0\bar v)
    = \int_P (G+(1+\ew)\psi_1)\bar v \qquad
    \forall\, v\in H^1_0(P), \\
    \psi_0|_{\dP^-} = 0.
  \end{array}
\right.
\end{equation*}
We obtain
\begin{eqnarray*}
  && \|\psi_1\|_{H^1(P)} \leq \text{const}\cdot\|\psi\|_{H^\half(\dP)} \leq A\|\psi\|_{H^1(\Pbar)}, \\
  && \|\psi_0\|_{H^1(P)} \leq C_\ew(\|G\|_{L^2(P)} + |1+\ew|\|\psi_1\|_{H^1(P)}),
\end{eqnarray*}
in which, as we have discussed, $C_\ew$ is large when $\ew$ is close to a Dirichlet eigenvalue of $-\Delta$ in $P$.
From these two bounds, we obtain
\begin{equation*}
  \|\psi\|_{H^1(P)} \leq C_\ew\|G\|_{L^2(P)} + A(C_\ew|1+\ew|+1)\|\psi\|_{H^1(\Pbar)}.
\end{equation*}

Now, in the case of negative material coefficient, when $(\ek,\ew_0)$ lies on the homogenized dispersion relation, we must put $\ew=-\ew_0$.  Because of \eqref{Cneg} we obtain a bound
\begin{equation}\label{Kneg}
    \|\psi\|_{H^1(P)} \leq K (\|G\|_{L^2(P)} + \|\psi\|_{H^1(\Pbar)})
\end{equation}
that holds for all $\kk$ and $\ek$.

In the positive case, we must put $\ew=\ew_0$, where $(\ek,\ew_0)$ lies on a chosen branch of the dispersion relation, and we obtain
\begin{equation}\label{psiPbound}
  \|\psi\|_{H^1(P)} \leq K_\ek (\|G\|_{L^2(P)} + \|\psi\|_{H^1(\Pbar)}),
\end{equation}
in which
\begin{equation*}
  K_\ek = \max\{C_{\ew_0}, A(C_{\ew_0}|1+\mu_m|+1)\}.
\end{equation*}
If the branch contains none of the $\mu'_\ell$ or these numbers are excised, the bound \eqref{Czeta} yields
\begin{equation}\label{Ktaubound}
  K_\ek \leq C_1\ek^2 + C_2 \qquad \forall\,\ek\text{ such that } \ew_0(\ek) \in [\mu^*_m,\mu_m)\setminus\bigcup\limits_{\ell}(\mu'_\ell-\varepsilon,\mu'_\ell+\varepsilon).
\end{equation}
in which $C_1$ and $C_2$ again depend on $P$ and the branch, and, in case there is an eigenvalue $\mu'_\ell$ in $[\mu^*_m,\mu_m)$, also on $\varepsilon$, and $C_2\to\infty$ as $\varepsilon\to0$.

\medskip
\noindent
{\slshape {\bfseries Problem 2:} Solution in $\Pbar$.}\, 
Suppose that the function $\psi\in\Honeper(\Pbar)$ satisfies
\begin{equation*}
  \renewcommand{\arraystretch}{1.5}
\left\{
  \begin{array}{l}
    \displaystyle
    \int_\Pbar \nabla\psi\cdot\nabla\bar v = \int_\Pbar(F_1\cdot\nabla\bar v+G_1\bar v)
      + \int_P(F_2\cdot\nabla\bar v+G_2\bar v) \qquad \forall\, v\in\Honeper(\QQ), \\
    \vspace{-3ex}\\
    \displaystyle
    \int_\Pbar\psi = 0.
  \end{array}
\right.
\end{equation*}
The following solvability conditions are necessarily satisfied,
\begin{equation*}
  \renewcommand{\arraystretch}{1.5}
\left\{
  \begin{array}{l}
    \displaystyle
    \int_P(F_2\cdot\nabla\bar v + G_2\bar v) = 0 \quad\forall\, v\in H^1_0(P),\\
    \vspace{-3ex}\\
    \displaystyle
    \int_\Pbar G_1 + \int_P G_2 = 0.
  \end{array}
\right.
\end{equation*}
The zero-average condition implies the Poincar\'e bound
\begin{equation*}
  \|\psi\|_{H^1(\Pbar)} \leq \Omega_\Pbar\|\nabla\psi\|_{L^2(\Pbar)}.
\end{equation*}
Extend $\psi$ to $\Honeper(\QQ)$ such that $\Delta\psi=\psi$ in $P$ so that
\begin{equation*}
  \|\psi\|_{H^1(P)} \leq A\|\psi\|_{H^1(\Pbar)},
\end{equation*}
and set $v=\psi$ above.
\begin{multline*}
  \|\nabla\psi\|^2_{L^2(\Pbar)} \leq \big( \|F_1\|_{L^2(\Pbar)} + \|G_1\|_{L^2(\Pbar)} \big)\|\psi\|_{H^1(\Pbar)}
     + \left( \|F_2\|_{L^2(P)} + \|G_2\|_{L^2(P)} \right)\|\psi\|_{H^1(P)} \\
     \leq \left( \|F_1\|_{L^2(\Pbar)} + \|G_1\|_{L^2(\Pbar)} + A\left( \|F_2\|_{L^2(P)} + \|G_2\|_{L^2(P)} \right)\right)\|\psi\|_{H^1(\Pbar)}.
\end{multline*}
Using the Poincar\'e bound, we then obtain
\begin{equation}\label{psiPbarbound}
  \|\psi\|_{H^1(\Pbar)} \leq \Omega_{\Pbar}^2\left( \|F_1\|_{L^2(\Pbar)} + \|G_1\|_{L^2(\Pbar)} + A\left( \|F_2\|_{L^2(P)} + \|G_2\|_{L^2(P)} \right) \right).
\end{equation}

\bigskip
{\bfseries Acknowledgment.}\
{R. Lipton was supported by NSF grant DMS-0807265 and AFOSR grant FA9550-05-0008, and S. Shipman was supported by NSF grant DMS-0807325.  The Ph.D. work of S. Fortes was also supported by these grants.  This work was inspired by the IMA ``Hot Topics" Workshop on Negative Index Materials in October, 2006.}

\bibliography{FLS2}

\begin{thebibliography}{10}

\bibitem{BouchitteFelbacq2004}
Guy Bouchitt{\'e} and Didier Felbacq.
\newblock Homogenization near resonances and artificial magnetism from
  dielectrics.
\newblock {\em C. R. Acad. Sci. Paris}, I(339):377--382, 2004.

\bibitem{BouchitteSchweizer2008}
Guy Bouchitt{\'e} and Ben Schweizer.
\newblock Homogenization of maxwell's equations with split rings.
\newblock Preprint, 2008.

\bibitem{Bruno1991}
Oscar~P. Bruno.
\newblock The effective conductivity of strongly heterogeneous composites.
\newblock {\em P Roy Soc A}, 433(1888):353--381, 1991.

\bibitem{Cherdantse2009}
Mikhail Cherdantsev.
\newblock Spectral convergence for high-contrast elliptic periodic problems
  with a defect via homogenization.
\newblock {\em Mathematika}, 55:29--57, 2009.

\bibitem{FelbacqBouchitte2005}
Didier Felbacq and Guy Bouchitt{\'e}.
\newblock Negative refraction in periodic and random photonic crystals.
\newblock {\em New Journal of Physics}, 159, 2005.

\bibitem{FigotinKuchment1998}
A.~Figotin and P.~Kuchment.
\newblock Spectral properties of classical waves in high-contrast periodic
  media.
\newblock {\em SIAM Journal on Applied Mathematics}, 58(2):683--702, 1998.

\bibitem{FigotinKuchment1996}
Alex Figotin and Peter Kuchment.
\newblock Band-gap structure of spectra of periodic dielectric and acoustic
  media. i. scalar model.
\newblock {\em SIAM J. Appl. Math.}, 56(1):68--88, 1996.

\bibitem{Fortes2010}
Santiago~P. Fortes.
\newblock {\em Power Series Expansions for Waves in High-Contrast Plasmonic
  Crystals}.
\newblock PhD thesis, Louisiana State University, 2010.

\bibitem{FortesLiptonShipman2010}
Santiago~P. Fortes, Robert~P. Lipton, and Stephen~P. Shipman.
\newblock Sub-wavelength plasmonic crystals: dispersion relations and effective
  properties.
\newblock {\em Proc. R. Soc. A}, (doi:10.1098/rspa.2009.0542):1--28, Feb. 2010.

\bibitem{HempelLienau2000}
Rainer Hempel and Karsten Lienau.
\newblock Spectral properties of periodic media in the large coupling limit.
\newblock {\em Commun. in Partial Differential Equations}, 25(7/8):1445--1470,
  2000.

\bibitem{KamotskiSmyshlyaev2006}
Ilia~V. Kamotski and Valery~P. Smyshlyaev.
\newblock Localised modes due to defects in high contrast periodic media via
  homogenization.
\newblock {\em Preprint}, 2006.

\bibitem{KamotskiMatthiesSmyshlyaev2007}
Vladimir Kamotski, Karsten Matthies, and Valery~P. Smyshlyaev.
\newblock Exponential homogenization of linear second order elliptic pdes with
  periodic coefficients.
\newblock {\em SIAM Journal on Mathematical Analysis}, 38(5):1565--1587, 2007.

\bibitem{KohnShipman2008}
Robert~V. Kohn and Stephen~P. Shipman.
\newblock Magnetism and homogenization of micro-resonators.
\newblock {\em SIAM Multiscale Model Simul}, 7(1):62--92, 2008.

\bibitem{Panasenko2009}
Grigory Panasenko.
\newblock Boundary conditions for the high order homogenized equation:
  laminated rods, plates and composites.
\newblock {\em Comptes Rendus M{\`E}canique}, 337(1):8 -- 14, 2009.

\bibitem{PendryHoldenRobbins1999}
J.~B. Pendry, A.~J. Holden, D.~J. Robbins, and W.~J. Stewart.
\newblock Magnetism from conductors and enhanced nonlinear phenomena.
\newblock {\em IEEE Trans. Microw. Theory Tech.}, 47(11):2075--2084, 1999.

\bibitem{Schwarz1890}
Hermann Schwarz.
\newblock {I}ntegration der partiellen {D}ifferentialgleichung ... unter
  vorgeschriebenen {B}edingungen.
\newblock {\em Gesammelte Mathematische Abhandlungen}, pages 241--268, 1890.

\bibitem{Shipman2010a}
Stephen~P. Shipman.
\newblock Power series for waves in micro-resonator arrays.
\newblock MMET, IEEE, 2010.

\bibitem{ShvetsUrzhumov2005}
Gennady Shvets and Yaroslav~A. Urzhumov.
\newblock Electric and magnetic properties of sub-wavelength plasmonic
  crystals.
\newblock {\em J. Opt. A: Pure Appl. Opt.}, 7:S23--S31, 2005.

\bibitem{SmyshlyaevCherednich2000}
V.~P. Smyshlyaev and K.~D. Cherednichenko.
\newblock On rigorous derivation of strain gradient effects in the overall
  behaviour of periodic heterogeneous media.
\newblock {\em J. Mech. Phys. Solids}, 48:1325--1357, 2000.

\bibitem{Smyshlyaev2009}
Valery~P. Smyshlyaev.
\newblock Propagation and localization of elastic waves in highly anisotropic
  periodic composites via two-scale homogenization.
\newblock {\em Mech. Mater.}, 41:434--447, 2009.

\bibitem{Zhikov2000}
V.~V. Zhikov.
\newblock On an extension of the method of two-scale convergence and its
  applications.
\newblock {\em Sb. Math.}, 191(7):973--1014, 2000.

\bibitem{Zhikov2004}
V.~V. Zhikov.
\newblock On spectrum gaps of some divergent elliptic operators with periodic
  coefficients.
\newblock {\em St. Petersburg Math J.}, 16(5), 2004.

\end{thebibliography}

\end{document}